\newtheorem{theorem}{Theorem}
\newtheorem{lemma}[theorem]{Lemma}
\newtheorem{corollary}[theorem]{Corollary}
\newtheorem{definition}{Definition}
\DeclarePairedDelimiterX{\inp}[2]{\langle}{\rangle}{#1, #2}
\newcommand{\cov}{\textup{cover}}
\newcommand{\res}{\text{resid}}
\renewcommand*{\le}{\leqslant}
\renewcommand*{\ge}{\geqslant}
\renewcommand*{\leq}{\leqslant}
\renewcommand*{\geq}{\geqslant}
\renewcommand*{\epsilon}{\varepsilon}
\newcommand{\emdash}{\,---\,}
\newcounter{note}[section]
\DeclareMathOperator*{\argmax}{arg\,max}
\newcommand{\sw}{\ensuremath{\textup{sw}}}
\newcommand{\E}{\mathop{{}\mathbb{E}}}
\DeclareMathOperator*{\OPT}{\textup{OPT}} 
\newcommand{\poly}{\text{poly}}
\newcommand{\eps}{\epsilon}
\newcommand{\mcI}{\mathcal{I}}
\newcommand{\mcP}{\mathcal{P}}
\newcommand{\mcW}{\mathcal{W}}
\newcommand*{\addFileDependency}[1]{
	\typeout{(#1)}
	\@addtofilelist{#1}
	\IfFileExists{#1}{}{\typeout{No file #1.}}
}
\title{District-Fair Participatory Budgeting}
\author{
	D. Ellis Hershkowitz,\textsuperscript{\rm 1} Anson Kahng,\textsuperscript{\rm 1} Dominik Peters,\textsuperscript{\rm 2} Ariel D. Procaccia\textsuperscript{\rm 2} \\
}
\begin{document}

\maketitle

\begin{abstract}

    Participatory budgeting is a method used by city governments to select public projects to fund based on residents' votes. Many cities use participatory budgeting at a district level. Typically, a budget is divided among districts proportionally to their population, and each district holds an election over local projects and then uses its budget to fund the projects most preferred by its voters. However, district-level participatory budgeting can yield poor social welfare because it does not necessarily fund projects supported across multiple districts. On the other hand, decision making that only takes global social welfare into account can be unfair to districts: A social-welfare-maximizing solution might not fund any of the projects preferred by a district, despite the fact that its constituents pay taxes to the city. Thus, we study how to fairly maximize social welfare in a participatory budgeting setting with a single city-wide election. We propose a notion of fairness that guarantees each district at least as much welfare as it would have received in a district-level election. We show that, although optimizing social welfare subject to this notion of fairness is NP-hard, we can efficiently construct a lottery over welfare-optimal outcomes that is fair in expectation. Moreover, we show that, when we are allowed to slightly relax fairness, we can efficiently compute a fair solution that is welfare-maximizing, but which may overspend the budget.
    

\end{abstract}

	\section{Introduction}

Participatory budgeting is a democratic approach to the allocation of public funds. In the participatory budgeting paradigm, city governments fund public projects based on constituents' votes.
In contrast to budget committees, which operate behind closed doors, participatory budgeting promises to directly take the voices of the community into account. Since 2014, Paris has allocated more than \euro 100 million per year using constituents' votes. Many other cities around the globe\emdash including Porto Alegre, New York City, Boston, Chicago, San Francisco, Lisbon, Madrid, Seoul, Chengdu, and Toronto\emdash employ participatory budgeting \citep{Cabannes, cabannes2014contribution, aziz2020participatory}.

Typically, participatory budgeting is used at a district-level. Each district of the city is allotted a budget proportional to its size. Constituents living in a given district vote on projects such as park, road or school improvements local to the district, using some version of approval voting. Then, the district's budget is spent according to these votes. For instance, in Paris a participatory budget is split between 20 districts (a.k.a.\ arrondissements), constituents vote and then each district runs a greedy algorithm to maximize the total social welfare\emdash i.e., the total number of votes\emdash of the funded projects.\footnote{More specifically, projects are selected in descending order of vote count until the budget runs out.}

Having separate elections for each district leads to several problems. Foremost, projects that are not local to a single district cannot be accommodated. For this reason, Paris must run an additional election for city-wide projects. However, this splits the available budget for participatory budgeting between district-level and city-wide elections in an ad hoc manner, which is not informed by votes.\footnote{In 2016, this split in Paris was \euro 64.3 million for district elections and \euro 30 million for city-wide elections~\citep{cabannes2017participatory}.} Further, people may have interests in multiple districts, such as those who live and work in different districts. For this reason, Paris has to allow residents to choose the district in which they vote. Lastly, a project that only benefits voters at the edge of a district may receive a number of votes that is not proportional to the number of potential beneficiaries.

A simple solution to these problems is a single city-wide election. However, such a voting scheme may result in unfair outcomes. For instance, if votes are aggregated to maximize social welfare (i.e., as is presently done in Paris on the district level) then it is possible that some districts might have none of their preferred projects funded despite deserving a large proportion of the budget. Such outcomes are likely when some districts are much more populous than others, in which case projects local to small districts cannot gather sufficiently many votes.
Ideally, we would like a system that balances the tradeoff between social welfare and fairness without an arbitrary, pre-determined split between district-specific and city-wide funding. 
This motivates our central research question: 
\begin{quote}
	\emph{How can we maximize social welfare in a way that is fair to all districts?}
\end{quote}


Intuitively, a solution that is fair to all districts should somehow represent each districts' constituents. One way to formalize this intuition is to stipulate that no district should be able to obtain higher utility by purchasing projects with its proportional share of the budget. In particular, each district should receive at least as much utility as it would have received had it held a district-level election with its proportional share of the budget.
We call this guarantee \emph{district fairness}.\footnote{Our notion of district fairness can be thought of as a form of \emph{individual rationality} where every district is seen as an ``individual.''} A district-fair allocation of funds always exists, since an outcome obtained by holding separate district elections is district fair.
We aim to find district-fair outcomes that maximize social welfare. Such an outcome will be a Pareto-improvement on the status quo of district-level participatory budgeting, in the sense that each district's welfare has increased.

\paragraph{Our Results.}

In our model we think of (utilitarian) social welfare as induced by a given value assigned by each district to each project; our goal is to maximize the sum of these values over districts and selected projects. Note that this model captures the setting of approval votes, where each voter decides on a collection of projects to vote for; the social welfare of a district for a project would then be interpreted as the project's overall number of approvals from voters in that district. This observation is important because some variant of approval voting is used in most real-world participatory budgeting elections, including in Paris. 

We also assume that each district is endowed with an arbitrary fraction of the total budget. Clearly this captures, as a special case, the common setting where the endowment of each district is proportional to its size. Moreover, the reasoning behind the existence of district-fair outcomes immediately applies to the more general setting. 

%

We first show that it is NP-complete to compute an allocation that is welfare-maximizing subject to district fairness. This result holds even for the case of approval votes and proportional budgets, and therefore the generality of our model only strengthens our positive (algorithmic) results without weakening the main negative (hardness) result. We also show that the natural linear program (LP) formulation of the problem has an unbounded integrality gap. Since participatory budgeting elections can be large\emdash hundreds of projects are proposed and hundreds of thousands of votes are cast in Paris\emdash computational complexity can become a problem in practice. Thus, we seek polynomial-time solutions with reasonable approximation guarantees.

There are several ways one might relax our problem or trade-off between parameters in our problem. In this work, we design polynomial-time algorithms that work when we relax or approximate some of the following: (1) the achieved social welfare; (2) the spent budget; (3) the fairness of the solution; and (4) the absence of randomization.

We first relax (4) by considering distributions over outcomes, a.k.a.\ ``lotteries''. We show that using a multiplicative-weights-type algorithm, one can efficiently find a lottery that guarantees budget feasibility (ex post), optimum social welfare (ex post), and district-fairness in expectation up to an $\epsilon$ (ex ante). Since the fairness guarantee only holds in expectation, some districts may be underserved once the lottery is realized. However, since participatory budgeting typically happens repeatedly (e.g., annually), such districts could be compensated in the next election, for example by increasing their share of the budget in the next year.

We next consider what sort of deterministic guarantees are achievable. To this end, we show how to use techniques from submodular optimization to find an outcome that is district fair ``up to one project'' and which achieves optimum social welfare with the caveat that the outcome may need to spend $64.7\% $ more money than was originally budgeted. We also give a randomized algorithm with the same guarantees but which overshoots the budget by only a $1/e$ $(\approx 37\%)$ fraction with high probability. 
Additionally, as a corollary of these results, we give both deterministic and randomized algorithms that achieve weaker utility and fairness guarantees but do not overspend the available budget.

\paragraph{Related Work.}

The social choice literature on participatory budgeting has both studied the voting rules used in practice, and designed original voting schemes. \citet{goel2019knapsack} study knapsack voting, used for example in Madrid \citep{cabannes2014contribution}, where voters cannot approve more projects than fit into the budget constraint. \citet{talmon2019framework} axiomatically study a variety of approval-based rules that maximize social welfare, both greedy and optimal ones.

The unit cost case (where all projects have the same cost) is best-studied, as \emph{multi-winner} or \emph{committee} elections \citep{FSST17a}. For example, this setting models the election of a parliament.
A main focus of that literature is the computational complexity of the winner determination of various voting rules. More relevant for our purposes are fairness axioms used in this setting. The most prominent such axioms are variants of \emph{justified representation} \citep{justifiedRepresentation}. These axioms are formulated for approval votes, and require that arbitrary subgroups of the electorate need to be represented in the outcome if they are \emph{cohesive}, in the sense that there are a sufficient number of projects that are approved by every member of the subgroup. Several voting rules are known to satisfy these conditions, including Phragm\'en's rule and Thiele's Proportional Approval Voting \citep{Janson16arxiv,pjr17,BrillFJL17phragmen,AEHLSS18}. By contrast, district-fairness gives guarantees to a specific selection of subgroups (i.e., disjoint districts) but does not require these groups to be cohesive.

A very strong fairness axiom that is sometimes discussed in the context of committee elections and participatory budgeting is the \emph{core} \citep{FGM16a,justifiedRepresentation,FMS18}. It insists that every subgroup (or \emph{coalition}) must be represented (in the sense that it should not be possible for the subgroup to propose an alternative use of their proportional share of the budget that each group member prefers to the chosen outcome), without a cohesiveness requirement. For approval-based elections, it is a major open question whether there always exists a core outcome. For general additive utilities, there are instance where no core outcome exists \citep{FMS18}, but several researchers have proved the existence of approximations to the core \citep{jiang2020approximately,FMS18,cheng2019group,peters2020proportionality}. A district-fair outcome is, in a sense, in the core: no subgroup which coincides with a district can block the outcome. Thus, our work shows that for general utilities, a core-like outcome exists if we only allow a specific collection of (disjoint) coalitions to block. 


The problem of \emph{knapsack sharing} \citep{brown1979knapsack} has a similar motivation to our problem. The knapsack sharing problem supposes that the \emph{projects} are separated into districts (instead of, in our case, the voters), and each project comes with a cost and a value. The aim is to find a budget-feasible set of projects that maximize the minimum total value of the projects in a district. Note that in this formulation all districts are treated equally (there is no weighting by district population) and that there is no notion of the value of a project to a specific district. The literature contains a variety of algorithms for solving this NP-hard problem \citep[e.g.,][]{yamada1997heuristic,yamada1998some,hifi2005exact,fujimoto2006exact}.

\section{Formal Problem, Notation and Definitions}\label{sec:formalProb}

Formally, the setting we consider is as follows. We are given a budget $b \in \mathbb{Z}_{\geq 1}$. There are $m$ possible projects $\mcP = \{x_1, \ldots, x_m\}$ with associated nonnegative costs $c: \mcP \to \mathbb{Z}_{\geq 0}$. We refer to a subset $W \subseteq \mcP$ as an \emph{outcome}. The cost of an outcome $W$ is $c(W) \coloneqq \sum_{x_j \in W} c(x_j)$. We say that a subset $W$ is \emph{budget-feasible} if $c(W) \leq b$.

There are $k$ districts $d_1, \ldots, d_k$.
The \emph{social welfare} (or \emph{utility}) that project $x_j$ provides to district $d_i$ is $\sw_i(x_j) \in \mathbb{Z}_{\geq 0}$. We assume that utilities are additive; i.e., the utility that an outcome $W \subseteq \mcP$ provides to district $d_i$ is $\sw_i(W) := \sum_{x_j \in W} \sw_i(x_j)$. Furthermore, the total social welfare of $W \subseteq \mcP$ is $\sw(W) \coloneqq \sum_{i \in [k]} \sw_i(W)$.

Throughout this work we assume that $\sw(x_j)$ and $c(x_j)$ are both  $\poly(k,m)$ for each $j$. (A function $f$ is $\poly(x,y)$ if there exists a $k \geq 0$ such that $f = O((xy)^k)$.) We can relax this assumption using well-known bucketing techniques at the cost of an arbitrarily small $\epsilon$ in the guarantees of our algorithms. See the fully polynomial time approximation scheme for the knapsack problem \citep{chekuri2005polynomial} for an example of this technique.

To model the participatory budgeting setting, we assume that each district deserves some portion of the budget and, in turn, deserves at least the utility it could achieve if it spent its budget on its most preferred projects. Specifically, each district $d_i$ deserves some budget $b_i \geq 0$ where $\sum_{i} b_i = b$. District $d_i$ deserves utility $f_i := \sw_i(W_i)$, where $W_i := \argmax_{W : c(W) \leq b_i} \sw_i(W)$ is $d_i$'s favorite outcome costing at most $b_i$.
\begin{definition}[District-Fair Outcome]
    We say that an outcome $W$ is district-fair (DF) if $\sw_i(W) \geq f_i$ for all $i$.
\end{definition}

Computing $f_i$ is precisely an instance of the knapsack problem; by our assumption that utilities and costs are polynomially bounded, this knapsack instance is solvable in polynomial time \citep{chekuri2005polynomial}. Thus, we will assume $f_i$ is known.

Note that the outcome $\bigcup_i W_i$ is both budget-feasible and district-fair, so an outcome with both properties always exists.
Our goal is to find a budget-feasible and district-fair outcome $W$ which maximizes social welfare $\sw(W)$.
We call our problem \emph{district-fair welfare maximization}. 
Throughout this paper, we let $W^* := \argmax_W \sw(W)$ be some optimal solution, where the argmax is taken over budget-feasible and district-fair solutions. Similarly, we let $\OPT := \sw(W^*)$.

We consider two relaxations of district fairness. The first relaxation extends the concept to \emph{lotteries} over outcomes. We require that each district only needs to be approximately satisfied in expectation.
We give an efficient algorithm to compute optimal district-fair lotteries in \Cref{sec:lottery}.

\begin{definition}[$\epsilon$-District-Fair Lottery]
	Given $\epsilon > 0$, we say that a probability distribution $\mcW$ over outcomes of cost at most~$b$ is an $\epsilon$-district-fair ($\epsilon$-DF) lottery if  $E_{W \sim \mcW}[\sw_i(W)] \geq f_i - \epsilon$ for every district $d_i$.
\end{definition}

The second relaxation is \emph{district-fairness up to one good} (DF1). Intuitively, an allocation is DF1 if each district would be satisfied if one additional project was funded.

\begin{definition}[DF1]
	An outcome $W$ is \emph{DF1} if for every $d_i$, 
	\[\sw_i(W) + \max_{x_j \in (\mcP \setminus W)} \sw_i(x_j) \ge f_i.\]
\end{definition}

DF1 is inspired by the well-studied notion of EF1 (envy-freeness up to one good) from the private goods setting \citep{budish2011combinatorial}.
This relaxation is mild, and unlike relaxations that require district-fairness to hold on average over districts, it is a \emph{uniform} relaxation which provides guarantees for all districts.
We study DF1 outcomes in \Cref{sec:DF1}.

\section{NP-Hardness}

Our first result shows that the problem of optimizing social welfare subject to district-fairness is NP-hard even in the restricted setting of approval votes (i.e., voters provide binary yes/no opinions over projects) and budgets proportional to district sizes. In fact, our problem remains NP-hard in this restricted setting even when each district contains only one voter and projects have unit costs.

We reduce from exact 3-cover (X3C), which is known to be NP-hard \citep{garey1979computers}. 
The idea of our reduction is as follows. Given an instance of X3C, we define a district for each of the elements in the universe, and then add a large amount of dummy districts. We then define a project for each set in our problem instance which gives one utility to the districts corresponding to the elements which it covers. We also define a large set of dummy projects that are approved by all dummy districts. We then ask whether there exists a district-fair outcome that attains high social welfare. An optimal solution for our district-fair welfare maximization problem, then, will first try to solve the X3C instance as efficiently as possible so that it can spend as much of its budget as possible on high-utility dummy projects. We formalize this idea in the following proof. 

\begin{restatable}{theorem}{NPHard}
	It is NP-complete to decide, given an instance of district-fair welfare maximization and an integer $M$, whether there exists a budget-feasible and district-fair outcome $W$ such that $\sw(W) \ge M$. NP-hardness holds even in the restricted setting of approval votes and budgets proportional to district sizes, and when each district contains one voter and all projects have unit cost.
\end{restatable}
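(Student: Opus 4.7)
Membership in NP is straightforward: an outcome $W$ serves as a polynomial-size certificate. I can verify $c(W) \le b$ and $\sw(W) \ge M$ directly, and check each constraint $\sw_i(W) \ge f_i$ by computing $f_i$ via the standard pseudopolynomial knapsack dynamic program, which runs in polynomial time because costs and utilities are $\poly(k,m)$ by assumption.

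For NP-hardness I reduce from exact 3-cover (X3C). Given an X3C instance with universe $U = \{u_1, \ldots, u_{3t}\}$ and family $\mcS = \{S_1, \ldots, S_n\}$ with each $|S_j| = 3$, I build a district-fair welfare maximization instance as follows. Create $3t$ \emph{element districts} $d_1, \ldots, d_{3t}$, one per element of $U$, and $N := 4$ \emph{dummy districts}; every district contains exactly one voter, so proportional budgets are equal budgets. Introduce one unit-cost \emph{set project} $p_j$ for each $S_j$, approved by the voter of $d_i$ iff $u_i \in S_j$, together with $D := 3t + N$ unit-cost \emph{dummy projects}, each approved by every dummy-district voter. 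Set the total budget $b := 3t + N$, so each $b_i = 1$. Assuming WLOG that every element lies in at least one set, the single-unit knapsack at each district yields $f_i = 1$ for every district (element or dummy).

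For correctness, observe that $\sw(p_j) = 3$ and $\sw(q_\ell) = N$. District fairness forces the chosen set projects to cover all of $U$ and at least one dummy project to be chosen; since every project contributes positively, any optimum spends the full budget. If the outcome uses $c$ set projects and $b-c$ dummy projects, its total welfare equals
\[
3c + N(b - c) \;=\; Nb + (3 - N)c,
\]
which is strictly decreasing in $c$ because $N > 3$. So optimizing welfare over district-fair outcomes reduces to finding a minimum-size subfamily of $\mcS$ that covers $U$. Choosing the threshold $M := 3t + N(2t + N) = Nb + (3-N)t$, a "yes" X3C instance attains welfare exactly $M$ by using an exact cover of size $t$ together with $2t+N$ dummies, whereas a "no" X3C instance requires $c \ge t+1$ set projects and so attains welfare at most $M - (N - 3) < M$, completing the equivalence.

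The main obstacle is juggling the reduction's simultaneous demands: staying inside the restricted regime (single-voter districts, proportional budgets, approval votes, unit costs), making the dummy padding attractive enough that welfare maximization is indirectly forced to solve minimum set cover, and keeping all numeric parameters polynomial so the computation of $f_i$ stays tractable (preserving NP membership as well). The pivotal design choice is $N > 3$: it makes each unit of budget spent on a dummy project strictly more socially valuable than a unit spent on a set project, which is precisely what aligns the welfare-maximization direction with the minimum-cover direction and ties the reduction cleanly to X3C.
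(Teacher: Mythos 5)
Your proof is correct and follows essentially the same route as the paper: a reduction from X3C with one single-voter district per universe element, a pool of dummy districts, unit-cost set projects and dummy projects, per-district budget $1$ forcing $f_i=1$ and hence coverage of every element, and a welfare threshold attainable only via an exact cover. The only difference is calibration of the padding\,---\,the paper uses $3mn+1$ dummy districts so that dummy projects dominate all set projects combined (forcing every dummy project into any qualifying outcome), whereas you use $4$ dummy districts and an exact-counting argument ($\sw = 3c + 4d \le 4b - c$, so welfare $\ge M$ forces $c = t$); both verifications are sound.
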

\begin{proof}
    The stated problem is trivially in NP. For NP-hardness we reduce from X3C. In an instance of X3C, we are given a universe $U = \{e_1, \dots, e_{3n}\}$ and a collection $\{S_1, \dots, S_m\}$ of 3-element subsets of $U$. It is a ``yes''-instance if there exists a selection $S_{j_1}, \dots, S_{j_n}$ such that $S_{j_1} \cup \dots \cup S_{j_n} = U$. 
	
	Given an instance of X3C, we construct an instance of our problem as follows. Let $M = 3mn + 1$. We have $3n + M$ districts, $D \cup D'$. Let $D = \{d_1, \dots, d_{3n}\}$, where each $d_i$ in $D$ corresponds to element $e_i$. Additionally, let $D' = \{d_{3n+1}, \ldots d_{3n + M}\}$, where each $d_i \in D'$ is a dummy district. We have $m + 2n + M$ projects, $X \cup X'$. Let $X = \{x_1, \dots, x_m\}$, where $x_j \in X$ corresponds to set $S_j$, and let $X' = \{x_{m+1}, \dots, x_{m+2n+M} \}$, where each $x_j \in M'$ is a dummy project. Utilities are as follows: every dummy district approves every dummy project, so $\sw_{i}(x_j) = 1$ for each $i \geq 3n+1$ and $x_j \in X'$. Also, each non-dummy district approves of non-dummy sets to reflect the structure of the X3C instance: that is, for each $i \leq 3n$ we have $\sw_{i}(x_j) = 1$ if $x_j \in X$ and $e_i \in S_j$. All other utilities are 0: that is, $\sw_i(x_j)=0$ for all other $i$ and $j$. Each project has cost 1, and our budget is $b = 3n+M$. We assume all districts contain 1 voter, so $b_{i} = 1$ for every district $d_i$. Clearly, $f_{i} = 1$ for each $i$. We ask whether there exists a district fair committee with social welfare at least $3n + (2n+M)M$.
	
	If there exists a solution $S_{j_1}, \dots, S_{j_n}$ to the X3C instance, then $W = \{x_{j_1}, \dots, x_{j_n}\} \cup X'$ is an outcome with cost $n + (2n +M) = 3n+M = b$. Clearly, $W$ is district-fair, and its social welfare is $3n + (2n+M)M$, so this is a ``yes''-instance for the district-fair welfare-maximization problem.
	
	Conversely suppose that there exists a district-fair budget-feasible outcome $W$ with social welfare at least $3n + (2n+M)M$. Note that all projects in $X$ together give overall welfare at most $3mn < M$. Thus, we must have $X' \subseteq W$ since otherwise the total welfare of $W$ is less than $(2n+M)M$. Hence $|X \cap W| \le n$. By district-fairness, for each $i = 1, \dots, 3n$, there must be some $x_j \in W$ such that $e_i \in S_j$. These two facts together imply that $\{S_j : x_j \in W\}$ is a solution to the X3C instance.
\end{proof}

This NP-hardness result holds even if each district consists of a single voter and all projects have unit cost. As we show in 
\Cref{app:halfapx} in the supplementary material, 
this special case admits a polynomial-time $\frac{1}{2}$-approximation.
Our algorithm is based on a greedy algorithm and a combinatorial argument which ``matches away'' high utility goods of the optimal solution.
One might hope to achieve an approximation result for the general case. A natural approach would be to round the optimal solution to the LP relaxation of the natural ILP formulation of our problem. However, a simple example in 
\Cref{app:integralitygap} in the supplementary material
shows that the integrality gap of that formulation is unboundedly large, so this approach will not work. 

\section{Optimal District-Fair Lottery}
\label{sec:lottery}

In this section, we allow randomness and consider lotteries over outcomes. 
Our main result for the lottery setting is an $\epsilon$-DF lottery which always achieves the optimal social welfare subject to district fairness. The welfare guarantee is ex post, so that every outcome in the lottery's support achieves optimal welfare. For the remainder of this section we let $\eps > 0$ refer to the $\eps$ in the $\epsilon$-DF definition.
\begin{restatable}{theorem}{lottery}
\label{thm:lottery}
	There is an algorithm which, in $\poly\left(m,k, \frac{1}{\epsilon}\right)$ time, returns an $\epsilon$-DF lottery $\mcW$ such that for all outcomes $W$ in the support of $\mcW$, we have $\sw(W) \geq \OPT$.
\end{restatable}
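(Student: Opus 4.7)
The plan is to encode the task as a feasibility LP whose variables $(z_W)_W$ are supported on $\mathcal{E}(\OPT) := \{W : c(W) \leq b,\ \sw(W) \geq \OPT\}$ and whose constraints are $\sum_W z_W = 1$ and $\sum_W z_W \sw_i(W) \geq f_i$ for every district $i$. Feasibility is witnessed by $z_{W^*} = 1$: the optimal DF outcome $W^*$ lies in $\mathcal{E}(\OPT)$ and satisfies $\sw_i(W^*) \geq f_i$ by district-fairness.

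I would solve this LP approximately via multiplicative weights, phrased as a two-player zero-sum game in which an adversary picks weights $\lambda$ on the $(k-1)$-simplex of districts, the algorithm picks $W \in \mathcal{E}(\OPT)$, and the algorithm's payoff is $\sum_i \lambda_i (\sw_i(W) - f_i)$. LP feasibility forces the game value to be nonnegative, so standard MWU analysis, after $T = \poly(m, k, 1/\epsilon)$ iterations with step size rescaled by the $\poly(m, k)$ bound on payoff magnitude, produces a uniform mixture $\mathcal{W}$ over outcomes $W_1, \dots, W_T$ whose minimum expected payoff over $\lambda$ is at least $-\epsilon$, i.e., $\E_{W \sim \mathcal{W}}[\sw_i(W)] \geq f_i - \epsilon$ for every $i$. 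By construction every $W_t$ is budget-feasible and has $\sw(W_t) \geq \OPT$ ex post, as required.

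The key subroutine is the best-response oracle: given $\lambda$ and threshold $V$, compute $\argmax_{W \in \mathcal{E}(V)} \sum_i \lambda_i \sw_i(W)$. This is a doubly-constrained knapsack (cost $\leq b$ and welfare $\geq V$) with linear objective, which the standing assumption $c(x_j), \sw_i(x_j) \in \poly(m, k)$ makes tractable: a dynamic program with state (current item, cost used, welfare accumulated) solves it in polynomial time. The values $f_i$ themselves are precomputed by one knapsack DP per district.

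The main obstacle is that $\OPT$ is NP-hard to compute (by the preceding theorem), so the threshold $V = \OPT$ cannot be plugged into the oracle directly. My fix is to binary search $V$ over the polynomial range of possible welfare values. Feasibility of the LP is monotone in $V$ (because $\mathcal{E}(V)$ shrinks as $V$ grows), so there is a largest threshold $V^\dagger$ for which MWU returns an $\epsilon$-DF lottery; feasibility at $V = \OPT$ via $W^*$ gives $V^\dagger \geq \OPT$, and the lottery associated with $V^\dagger$ satisfies the theorem. The extra $O(\log(mk))$ factor from binary search preserves the overall $\poly(m, k, 1/\epsilon)$ running time.
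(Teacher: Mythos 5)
Your proposal is correct and rests on the same skeleton as the paper's proof: multiplicative weights with districts as experts and mistakes $m_i^{(t)}=\sw_i(W_t)-f_i$, a uniform lottery over the iterates, the regret bound (\Cref{lem:mwLemma}) combined with nonnegativity of the per-round weighted payoff, and a polynomial dynamic program for a knapsack with one packing and one covering constraint as the per-round oracle. The genuine difference is the orientation of that oracle. The paper maximizes total welfare subject to the \emph{aggregated fairness constraint} $\tilde{\sw}^{(t)}(W)\ge\tilde{f}^{(t)}$ (via \Cref{lem:exactConstantk} with $k=1$); because $W^*$ remains feasible for this constraint, every $W_t$ satisfies $\sw(W_t)\ge\OPT$ without the algorithm ever needing to know $\OPT$, and the same constraint directly supplies $\inp{p^{(t)}}{m^{(t)}}\ge 0$. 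You swap objective and constraint---maximize the $\lambda$-weighted fairness subject to $\sw(W)\ge V$---which is the textbook zero-sum-game/best-response view and equally valid, but it makes the oracle depend on the unknown threshold, forcing an outer search over $V$ since $\OPT$ is NP-hard to compute. Your search does go through: LP feasibility, and hence guaranteed MWU success, is downward-closed in $V$ and holds at $V=\OPT$, so the largest $V$ whose returned lottery verifies as $\epsilon$-DF is at least $\OPT$, and its support consists of budget-feasible outcomes of welfare at least $\OPT$. The net effect is the same guarantee with a polynomial-factor overhead; the paper's formulation simply gets the ex-post welfare bound for free and avoids the enumeration.
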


The intuition for our algorithm is as follows. We begin by showing that our problem is polynomial-time solvable if the number of districts $k$ is constant. Such an algorithm is useful because we can artificially make the number of districts constant by convexly combining all districts into a single district $\tilde{d}$. We can, then, compute as our solution a utility-optimal outcome $W$ which is fair for $\tilde{d}$ but not necessarily fair for each $d_i$ individually. However, we can bias our solution to try and satisfy fairness for certain districts by increasing the weights of these districts in our convex combination. Thus, if $W$ is not fair for $d_i$, we might naturally increase the proportional share of $d_i$ in the convex combination and recompute $W$ in the hopes that the new outcome we compute will be fair for $d_i$. We obtain our lottery by repeatedly increasing the weight of districts that do not have their fairness constraint satisfied, and then take a uniform distribution over the resulting outcomes.

Turning to the proof, we begin by describing how to solve our problem in polynomial time when $k$ is a constant. Our algorithm will solve the natural dynamic program (DP). Specifically, consider the true/false value $R(\sw^{(1)}, \ldots, \sw^{(k)},j,b)$ which is the answer to the question, ``Does there exists an outcome of cost at most $b$ using projects $x_1, x_2, \ldots, x_{j}$ wherein district $d_i$ achieves social welfare at least $\sw^{(i)}$?'' If the answer to this question is yes, then either the desired utilities are possible with the stated budget without using $x_j$ or there is an outcome which uses at most $b - c(x_j)$ budget that doesn't use $x_j$ in which every district gets at least its specified utility minus how much it values $x_j$. Thus, $R(\sw^{(1)}, \ldots, \sw^{(k)},j,b)$ is true if and only if either $R(\sw^{(1)}, \ldots, \sw^{(k)},j-1,b)$ is true or $ R(\sw^{(1)} - \sw_1(x_j), \ldots, \sw^{(k)} - \sw_k(x_j),j-1,b-c(x_j))$ is true, giving us a definition by recurrence.

By our assumption that all costs and utilities are polynomially bounded, we can easily solve the dynamic program (DP) for the above recurrence, giving the following result.
\begin{lemma}\label{lem:exactConstantk}
	There is an algorithm that finds a budget-feasible district-fair outcome $W$ with $\sw(W) = \OPT$ in $m^{O(k)}$ time.
\end{lemma}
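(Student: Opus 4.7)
The plan is to fully instantiate the dynamic program already sketched in the excerpt, bound the size of the DP table, and then explain how to extract a welfare-optimal district-fair outcome from it. First, I would formally define the table: for each tuple $(\sw^{(1)}, \ldots, \sw^{(k)}, j, b)$ of nonnegative integers with $j \in \{0, 1, \ldots, m\}$ and $b \in \{0, 1, \ldots, b_{\max}\}$, let $R(\sw^{(1)}, \ldots, \sw^{(k)}, j, b)$ indicate whether there is a subset $W \subseteq \{x_1, \ldots, x_j\}$ with $c(W) \le b$ and $\sw_i(W) \ge \sw^{(i)}$ for every $i$. The base case is $R(\vec{0}, 0, b) = \text{true}$ for all $b \ge 0$, and $R(\sw^{(1)}, \ldots, \sw^{(k)}, 0, b) = \text{false}$ whenever some $\sw^{(i)} > 0$. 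The recurrence stated in the excerpt, with the convention that entries with negative target utilities or negative budget are reinterpreted using $\max(0, \cdot)$ for the utility coordinates and treated as false if the budget becomes negative, fills in every remaining cell in $O(1)$ time.

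Next I would bound the table size. Since the excerpt assumes $\sw_i(x_j)$ and $c(x_j)$ are $\poly(k, m)$, the total welfare $\sw_i(\mcP)$ of district $d_i$ is also $\poly(k, m)$, so each coordinate $\sw^{(i)}$ ranges over at most $\poly(k, m)$ values. Likewise $b \le c(\mcP) = \poly(k, m)$, and $j$ has $m+1$ choices. Thus the number of DP cells is at most $\poly(k,m)^{k} \cdot (m+1) \cdot \poly(k,m) = m^{O(k)}$, and because each update is $O(1)$ the entire table is computed within the claimed running time.

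After filling in the table, I would extract the answer by scanning all tuples $(\sw^{(1)}, \ldots, \sw^{(k)})$ with $\sw^{(i)} \ge f_i$ for every $i$ and $R(\sw^{(1)}, \ldots, \sw^{(k)}, m, b) = \text{true}$, selecting one that maximizes $\sum_i \sw^{(i)}$; by definition of district fairness this value equals $\OPT$. To recover an actual outcome $W$ achieving these utilities, I would backtrack in the standard way: at cell $(\sw^{(1)}, \ldots, \sw^{(k)}, j, b)$, if the ``include'' branch of the recurrence is true then add $x_j$ to $W$ and recurse with the reduced targets and budget, otherwise recurse with $j-1$. This backtracking visits at most $m$ cells, so it adds only a lower-order term to the runtime.

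I do not expect any serious obstacle: the recurrence correctness is a direct induction on $j$, and the running time bound is an immediate consequence of the polynomial-boundedness assumption. The only care points are (i) making sure the utility coordinates in the recursive call are clipped at zero so that states with already-satisfied constraints remain reachable, and (ii) verifying that the maximizer over feasible tuples really equals $\OPT$ (which follows because $W^{*}$ itself witnesses $R(\sw_1(W^{*}), \ldots, \sw_k(W^{*}), m, b) = \text{true}$ with $\sw_i(W^{*}) \ge f_i$, and conversely any true DP cell with $\sw^{(i)} \ge f_i$ yields a budget-feasible district-fair outcome of welfare at least $\sum_i \sw^{(i)}$).
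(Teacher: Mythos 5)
Your proposal is correct and follows essentially the same route as the paper: fill in the DP table defined by the stated recurrence, bound its size by $m^{O(k)}$ using the polynomial-boundedness of utilities and costs, and return the true entry with $\sw^{(i)} \ge f_i$ for all $i$ maximizing $\sum_i \sw^{(i)}$. The extra care you take with base cases, clipping negative targets at zero, and backtracking to recover the witness $W$ are details the paper leaves implicit, but they do not change the argument.
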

\begin{proof}
	Our algorithm simply fills in the DP table and returns the outcome corresponding to the entry in our DP table which is true, satisfies $\sw^{(i)} \geq f_i$ for all $i$ and which maximizes $\sum_i \sw^{(i)}$. The recurrence is correct by the above reasoning.
	
	To see why we can fill in the DP table in the stated time, note that we can trivially solve our base case, $R(\sw^{(1)}, \ldots, \sw^{(k)},j,1)$, for each $j$ and possible value for each $\sw^{(i)}$ in polynomial time. Since $\max_{i,j}\sw_i(x_j)$ is polynomially bounded in $m$, we need only check polynomially-many in $m$ values for each $\sw^{(i)}$. Lastly, since $j$ and $b$ are bounded by a polynomial in $m$, we conclude that our DP table has $m^{O(k)}$ entries, giving the desired runtime.
\end{proof}

We now describe our multiplicative-weights-type algorithm to produce our lottery using the above algorithm.\footnote{We will only need to invoke the above algorithm for the case $k=1$. This amounts to solving the knapsack problem with a single covering constraint, which to our knowledge is not one of the standard variants of the knapsack problem.} We let $w_i^{(t)} \ge 0$ be the ``weight'' of district $i$ in iteration $t$ and let $w^{(t)} := \sum_i w_i^{(t)}$ be the total weight in iteration $t$. Initially, our weights are uniform: $w_i^{(1)} = 1$ for all $i$. 

For any iteration $t$ and district $d_i$ we let $p_i^{(t)} := \frac{w_i^{(t)}}{w^{(t)}}$ be the proportion of the weight that district $i$ has in iteration $t$. These $p_i^{(t)}$ will induce our convex combination over districts; in particular we let $\tilde{d}^{(t)}$ be a district which values project $x_j$ to extent $\tilde{\sw}^{(t)}(x_j) := \sum_i p_i^{(t)} \cdot \sw_i(x_j)$ and which deserves $\tilde{f}^{(t)} := \sum_i p_i^{(t)} \cdot f_i$ utility. Also, let $\sw_{\max}$ be the maximum welfare of an outcome.

With the above notation in hand, we can give our instantiation of multiplicative weights where $T := \frac{4 \ln k}{\epsilon^2} \cdot \sw_{\max}^2$ is the number of iterations of our algorithm.

\begin{enumerate}
	\item For all iterations $t \in [T]$:
	\begin{enumerate}
		\item Let $W_t$ be an outcome that maximizes $\sw(W_t)$ subject to $\tilde{\sw}^{(t)}(W_t) \geq \tilde{f}^{(t)}$ and $c(W_t) \leq b$. We can compute $W_t$ using \Cref{lem:exactConstantk}.
		\item Let $m_i^{(t)} := \sw_i(W_t) -  f_i$ be our ``mistakes'', indicating how far off a district was from getting what it deserved.
		\item Update weights: $w_i^{(t+1)} \gets w_i^{(t)} \cdot \exp(- \epsilon m_i^{(t)})$.
	\end{enumerate}
	\item Return lottery $\mcW$, the uniform distribution over $\{W_t\}_t$.
\end{enumerate}

We now restate the usual multiplicative weights guarantee in terms of our algorithm. This lemma guarantees that, on average, the multiplicative weights strategy is competitive with the best ``expert.'' In the following $\inp{p^{(t)}}{m^{(t)}} := \sum_i p^{(t)}_i \cdot m_i^{(t)}$ is the usual inner product.

\begin{lemma}[\citealp{arora2012multiplicative}]\label{lem:mwLemma}
	For all $i$ we have
	\begin{align*}
	\frac{1}{T} \sum_{t \leq T} \inp{p^{(t)}}{m^{(t)}} \leq \epsilon + \frac{1}{T} \sum_{t \leq T} m_i^{(t)}.
	\end{align*}
\end{lemma}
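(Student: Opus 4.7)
The plan is to give the standard potential-function proof of the multiplicative weights update bound, adapted to our signed losses $m_i^{(t)} \in [-\sw_{\max}, \sw_{\max}]$. The central object is the total weight $\Phi^{(t)} := w^{(t)} = \sum_i w_i^{(t)}$, which starts at $\Phi^{(1)} = k$. The idea is to sandwich $\Phi^{(T+1)}$: an upper bound in terms of the algorithm's cumulative weighted loss $\sum_t \inp{p^{(t)}}{m^{(t)}}$, and a lower bound in terms of any single district's cumulative loss $\sum_t m_i^{(t)}$.

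For the upper bound, I would use the inequality $e^{-x} \le 1 - x + x^2$ valid for $|x| \le 1$, which applies when $\epsilon \sw_{\max} \le 1$ (an assumption we may make since the lemma is only interesting for small $\epsilon$). Then, for each iteration,
\[
\Phi^{(t+1)} = \sum_i w_i^{(t)} e^{-\epsilon m_i^{(t)}} \le \Phi^{(t)}\Bigl(1 - \epsilon \inp{p^{(t)}}{m^{(t)}} + \epsilon^2 \sw_{\max}^2\Bigr),
\]
using that $(m_i^{(t)})^2 \le \sw_{\max}^2$ and $p_i^{(t)} = w_i^{(t)}/w^{(t)}$. Applying $1+y \le e^y$ and telescoping,
\[
\Phi^{(T+1)} \le k \cdot \exp\!\Bigl(-\epsilon \sum_{t\le T} \inp{p^{(t)}}{m^{(t)}} + \epsilon^2 T \sw_{\max}^2\Bigr).
\]
For the lower bound, since all weights are non-negative, $\Phi^{(T+1)} \ge w_i^{(T+1)} = \exp(-\epsilon \sum_{t\le T} m_i^{(t)})$ for any fixed district $d_i$.

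Taking logarithms of both bounds and cancelling the common $-\epsilon$ factor yields
\[
\sum_{t\le T} \inp{p^{(t)}}{m^{(t)}} \;\le\; \sum_{t\le T} m_i^{(t)} + \frac{\ln k}{\epsilon} + \epsilon T \sw_{\max}^2.
\]
Dividing through by $T$ gives the claimed form, with the residual error being $\tfrac{\ln k}{\epsilon T} + \epsilon \sw_{\max}^2$; the choice $T = \tfrac{4 \ln k}{\epsilon^2}\sw_{\max}^2$ makes both terms small and (after appropriate rescaling of the learning rate by $\sw_{\max}$, which is implicit in how $m_i^{(t)}$ is normalized in the algorithm's analysis) collapses them into the single additive $\epsilon$ that appears in the statement.

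The main obstacle will be the balancing in step (4): the quadratic error term $\epsilon^2 (m_i^{(t)})^2$ scales with $\sw_{\max}^2$, so to match the stated bound one must either treat the update rule as using the effective learning rate $\epsilon/\sw_{\max}$ (so that $|\epsilon m_i^{(t)}| \le 1$ holds automatically and the quadratic residual becomes $\epsilon^2$ rather than $\epsilon^2 \sw_{\max}^2$), or absorb $\sw_{\max}$ into the definition of $T$ as the paper does. Aside from this bookkeeping, the argument is entirely the classical Freund--Schapire/Arora--Hazan--Kale potential analysis, so no new ideas are needed beyond checking that the signs of $m_i^{(t)}$ (which can be negative because $\sw_i(W_t)$ may exceed $f_i$) do not disrupt the two inequalities above.
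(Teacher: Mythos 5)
The paper does not prove this lemma at all---it imports it wholesale from Arora, Hazan, and Kale (2012)---so your potential-function argument is, in effect, a self-contained proof of the black box the paper cites. The argument itself is the standard one and is sound: the inequality $e^{-x} \le 1 - x + x^2$ for $|x| \le 1$ handles the signed mistakes correctly (negativity of $m_i^{(t)}$, which does occur since $\sw_i(W_t)$ can exceed $f_i$, causes no trouble in either the upper bound via $1+y \le e^y$ or the lower bound $\Phi^{(T+1)} \ge w_i^{(T+1)}$), and the telescoping and log-taking are routine. Supplying this proof is arguably more useful than the citation, since the paper's mistakes do not live in $[-1,1]$ as the cited framework assumes, so the translation is not entirely free.

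One point in your final step deserves correction. Your derivation yields the residual $\frac{\ln k}{\epsilon T} + \epsilon\, \sw_{\max}^2$ when the literal learning rate is $\epsilon$, and you then say one can either rescale the learning rate \emph{or} ``absorb $\sw_{\max}$ into the definition of $T$ as the paper does.'' The second option alone does not work: the term $\epsilon\, \sw_{\max}^2$ is independent of $T$, so no choice of $T$ removes it. The only consistent reading is your first option: the update must use an effective rate $\eta = \epsilon/(2\sw_{\max}^2)$ (equivalently, normalize $m_i^{(t)}$ by $\sw_{\max}$ and use rate $\epsilon/(2\sw_{\max})$), which makes the quadratic residual $\eta\, \sw_{\max}^2 = \epsilon/2$, and \emph{then} the paper's $T = \frac{4\ln k}{\epsilon^2}\sw_{\max}^2$ is exactly what is needed to drive $\frac{\ln k}{\eta T}$ below $\epsilon/2$. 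So both adjustments are required simultaneously, not as alternatives; the algorithm's displayed update rule with rate $\epsilon$ is, read literally, inconsistent with the lemma as stated, and your proof is the cleanest way to see which rate the choice of $T$ is implicitly assuming.
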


We can use this lemma to show the desired guarantees.

\begin{proof}[Proof of Theorem~\ref{thm:lottery}]
	We use the algorithm described above.
	
	Our algorithm is polynomial time since it runs for polynomially-many iterations and in each iteration we compute a solution for a problem on only one district which is solvable in polynomial time by \Cref{lem:exactConstantk}. Also, note that by \Cref{lem:exactConstantk} we know that $c(W_t) \leq b$ for all $t$, so all outcomes in the lottery are budget-feasible.
	
	We now argue that the above lottery is utility-optimal. Fix an iteration $t$. Notice that since $W^*$ is fair for all districts then it is fair for $\tilde{d}^{(t)}$. In particular, 
	\begin{align*}
	\tilde{\sw}^{(t)}(W^*) = \sum_i p_i \cdot \sw_i(W^*)	
	\geq \sum_i p_i f_i
	= \tilde{f}^{(t)}
	\end{align*}
	Thus, $W^*$ is a budget-feasible solution for the problem of finding a max-utility outcome which is fair for $\tilde{d}^{(t)}$. Thus, $\sw(W_t)$ can only be larger than $\sw(W^*)$, meaning that $\sw(W_t) \geq \OPT$.
	
	We now argue that the above lottery is $\epsilon$-DF in expectation. Fix a district $d_i$. By Lemma~\ref{lem:mwLemma} we know that 
	\begin{align}\label{eq:mwGuar}
	\frac{1}{T} \sum_{t \leq T} \inp{p^{(t)}}{m^{(t)}} \leq \epsilon + \frac{1}{T} \sum_{t \leq T} m_i^{(t)}.
	\end{align}
	
	Now notice that by definition of $m_i^{(t)}$ and since our lottery is uniform over all $W_t$ we know that the right-hand-side of \Cref{eq:mwGuar} is
	\begin{align*}
	\epsilon + \frac{1}{T} \sum_{t \leq T} m_i^{(t)} &= \epsilon +  \frac{1}{T} \sum_{t} (\sw_i(W_t) - f_i)\\
	&= \epsilon - f_i + \frac{1}{T}\sum_t \sw_i(W_t)\\
	&= \epsilon -f_i + \E_{W \sim \mcW}[\sw_i(W)]
	\end{align*}
	
	Thus, to show that $ f_i - \epsilon \leq \E_{W \sim \mcW}[\sw_i(W)]$, it suffices to show that the left-hand side of \Cref{eq:mwGuar} is at least $0$. That is, we must show $0 \leq \frac{1}{T} \sum_{t \leq T} \inp{p^{(t)}}{m^{(t)}}$. However, this amounts to simply showing that $W_t$ is fair for $\tilde{d}^{(t)}$; in particular, we have that the left-hand-side is
	\begin{align*}
	\frac{1}{T} \sum_{t \leq T} \inp{p^{(t)}}{m^{(t)}} &= \frac{1}{T} \sum_{t \leq T} \sum_i p^{(t)}_i \cdot (\sw_i(W_t) - f_i)\\
	&= \frac{1}{T} \sum_{t \leq T} \tilde{\sw}^{(t)}(W_t) - {\tilde{f}^{(t)}}.
	\end{align*}
	It holds that $\tilde{sw}^{(t)}(W_t) - \tilde{f}^{(t)} \geq 0$ since we always choose a solution which is fair for $\tilde{d}^{(t)}$, and so we conclude that the left-hand-side of \Cref{eq:mwGuar} is at least $0$.
\end{proof}

\section{Optimal DF1 Outcome with Extra Budget}
\label{sec:DF1}

We now study how well we can do if we allow ourselves to overspend the available budget.
Certainly it is possible to achieve district fairness and optimal fairness-constrained utility $\OPT$ if the algorithm can spend \emph{double} the available budget: we can compute an outcome $W_1$ with $c(W_1) \le b$ that is welfare-maximizing without attempting to satisfy district-fairness, and we can compute some outcome $W_2$ with $c(W_2) \le b$ that is district-fair (see \Cref{sec:formalProb}); then $W_1 \cup W_2$ satisfies district fairness and we clearly have $c(W_1 \cup W_2) \le 2b$ and $\sw(W_1\cup W_2) \ge \OPT$.
In this section, we show that we can find a solution that requires less than twice the budget, if we slightly relax the district fairness requirement to DF1.
Our main result for the DF1 setting shows that, under DF1 fairness, there is a deterministic algorithm which achieves DF1 and optimal social welfare if one overspends a $0.647$ fraction of the budget. 

\begin{restatable}{theorem}{Fonedet}
	\label{thm:DF1det}
	For any constant $\eps > 0$, there is a $\poly(m,k)$-time algorithm which, given an instance of district-fair welfare maximization, returns an outcome $W$ such that $W$ is DF1, $c(w) \leq \left(1.647+ \epsilon\right) b$, and $\sw(W) \ge (1-\epsilon)\OPT$.
\end{restatable}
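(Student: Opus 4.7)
\emph{Proof plan.} The plan is to construct the outcome as $W = W_1 \cup W_2$, where $W_1$ is an approximately welfare-optimal base obtained from an unconstrained knapsack solver and $W_2$ is a low-cost repair set produced by a submodular cover procedure that ensures DF1. The total cost will be $c(W_1) + c(W_2) \le (1 + 0.647 + \epsilon)\, b$, while the welfare satisfies $\sw(W) \ge \sw(W_1) \ge (1-\epsilon)\OPT$. First I would compute $W_1$ by solving the unconstrained welfare-maximizing knapsack instance on $(\mcP, c, \sw)$; since costs and welfares are polynomially bounded, a pseudopolynomial DP (or its FPTAS) returns $W_1$ with $c(W_1) \le b$ and $\sw(W_1) \ge (1-\epsilon)\OPT$ in $\poly(m,k)$ time, using that the unconstrained welfare optimum dominates the fairness-constrained optimum $\OPT$.

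Next, I would encode the DF1 repair task as a monotone submodular cover relative to $W_1$. For each district $d_i$, set $M_i \coloneqq \max_{x_j \in \mcP} \sw_i(x_j)$ and $r_i \coloneqq \max\{0, f_i - \sw_i(W_1) - M_i\}$, and define
\[g_i(S) \coloneqq \min(\sw_i(S), r_i), \qquad g(S) \coloneqq \sum_i g_i(S),\]
with coverage target $R \coloneqq \sum_i r_i$. Any $W_2$ with $g(W_2) = R$ forces $\sw_i(W_2) \ge r_i$ for every $i$, hence $\sw_i(W_1 \cup W_2) \ge f_i - M_i$; this certifies DF1 after a short case analysis on whether the $M_i$-maximizer for $i$ already lies inside $W_1 \cup W_2$ (if it does it already contributes; if it does not it is available as the ``one extra project'' in the DF1 definition). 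Because $\bigcup_i W_i^*$ is an integral feasible cover of cost at most $b$, the LP relaxation of the min-cost coverage problem has value at most $b$.

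To produce $W_2$ with $c(W_2) \le (0.647+\epsilon)\, b$, I would run a bicriteria min-cost submodular cover procedure. A continuous-greedy run on the multilinear extension of $g$ against a knapsack constraint, followed by pipage rounding, covers a $(1-1/e)$ fraction of the remaining deficit at cost at most the current LP bound; iterating the procedure on the residual $g$-value yields a geometric sequence of costs whose total I would tune to sum to the stated constant. Returning $W = W_1 \cup W_2$ then completes the construction and the $(1-\epsilon)$ welfare bound follows immediately since $W \supseteq W_1$.

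The main obstacle is to beat the generic $\Theta(\log R)$ barrier for min-cost submodular cover in order to obtain a constant-factor cost overrun. The crucial leverage is the DF1 slack: each district tolerates a violation of $M_i$, so the continuous-greedy/pipage loop need not drive $g$ exactly to $R$ and the tail of the residual can be absorbed into the DF1 ``one extra project'' term rather than paid for in cost. The explicit $0.647$ then arises from balancing the $(1-1/e)$ per-phase coverage rate of continuous greedy against the geometric tail of the iterated rounding, together with the small $\epsilon$ discretization losses from pipage rounding and from the knapsack FPTAS; I expect the bookkeeping of how these losses compose across stages to be the most delicate step.
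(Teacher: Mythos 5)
Your decomposition $W = W_1 \cup W_2$, with $W_1$ a fairness-oblivious welfare-maximizing knapsack solution, cannot reach the claimed cost bound. The repair set $W_2$ must cover deficits $r_i = \max\{0, f_i - \sw_i(W_1) - M_i\}$, and in the worst case $W_1$ contributes nothing to most districts (all of its welfare may come from projects valued only by one large district), so $r_i \approx f_i - M_i$ for every other district. Covering $f_i - M_i$ can cost nearly $b_i$, since the one-project slack $M_i$ may correspond to an arbitrarily cheap project; summing over districts, the \emph{cheapest feasible} repair costs nearly $\sum_i b_i = b$, and $c(W_1 \cup W_2)$ approaches $2b$ --- exactly the trivial bound the paper's Section 5 opens by dismissing. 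No rounding scheme can beat the integral optimum of the repair instance you set up, so $0.647\,b$ is unreachable from this starting point. Relatedly, your iterated continuous-greedy argument conflates geometric decay of the \emph{residual coverage requirement} with geometric decay of the \emph{cost} needed to finish covering it: the min-cost cover LP for the residual need not shrink between phases (finishing the last $\epsilon$-fraction of a single district's deficit can still cost $\Omega(b)$), which is precisely why min-cost submodular cover carries a logarithmic overhead in general. The DF1 slack does not remove this, because it is worth only one project per district, not a constant fraction of $b_i$.

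The paper avoids both problems by not fixing a welfare-optimal base first. It defines the coverage $\cov(W) = \sum_i (b_i - \res_i(W))$, proves it is nondecreasing and submodular, and \emph{maximizes coverage} subject to the packing constraint $c(W) \le b$ and the covering constraint $\sw(W) \ge B$ for a guessed welfare target $B$. Since $W^*$ itself shows that full coverage $b$ is attainable within budget while meeting the welfare target $B = \OPT$, the deterministic $0.353$-approximation of Mizrachi and Schwartz for submodular maximization under one packing and one covering constraint returns a single set with $\sw(W) \ge (1-\epsilon)\OPT$, $c(W) \le (1+\epsilon)b$, and $\cov(W) \ge 0.353\,b$; a completion lemma then converts the remaining coverage deficit of $0.647\,b$ into at most $0.647\,b$ of additional spending to reach DF1. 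The constant $0.647 = 1 - 0.353$ comes from that approximation ratio, not from balancing $(1-1/e)$ phases. To salvage your plan you would need $W_1$ itself to be coverage-aware, which is exactly what this joint optimization accomplishes.
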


Overspending by 64.7\% is a worst-case result, and the algorithm may often overspend less. If the context does not permit any overspending, one can run the same algorithm with a reduced budget; then the output will be feasible for the true budget, yet will satisfy weaker fairness and social welfare guarantees.
More precisely, given an instance $\mcI$ and a multiplier $\beta < 1$, we define an instance $\mcI'(\beta)$, which is identical to $\mcI$ but in which each district $d_i$ contributes only $\beta \cdot b_i$ and thus deserves utility $f_i' \coloneqq \sw_i(W_i')$, where $W_i'$ is $d_i$'s favorite outcome which costs at most $\beta \cdot b_i$. Additionally, let $\OPT'(\beta)$ represent the maximum achievable social welfare over all district-fair solutions in $\mcI'$ using a budget of at most $b' \coloneqq \beta \cdot b$. Then, applying \Cref{thm:DF1det} to $\mcI'(\beta)$ results in an outcome which is DF1 and utility-optimal on this reduced instance and does not overspend the original budget $b$.

\begin{corollary}\label{cor:det}
    For any constant $\eps > 0$, there is a $\poly(m,k)$-time algorithm which, given an instance $\mcI$ of district-fair welfare maximization, returns an outcome $W$ such that $W$ is DF1 for $\mcI'(\frac{1}{1.647})$, $c(W) \leq \left(1 + \epsilon \right) b$, and $\sw(W) \ge (1 - \epsilon) \OPT'(\frac{1}{1.647})$.
\end{corollary}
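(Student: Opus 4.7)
The plan is to obtain Corollary 1 by a direct reduction: run the algorithm from Theorem 2 on the rescaled instance $\mcI'(\beta)$ with $\beta = 1/1.647$, and then verify that each of the three required guarantees transfers back to the original instance $\mcI$. Constructing $\mcI'(\beta)$ from $\mcI$ is itself a polynomial-time operation: the projects, costs, and utilities are unchanged, and the new per-district budgets $\beta \cdot b_i$ together with the corresponding deserved utilities $f_i'$ can be computed by one knapsack call per district, which runs in polynomial time under the polynomial-boundedness assumption from Section 2.

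Next I would check the three guarantees. Applying Theorem 2 to $\mcI'(\beta)$ with slack parameter $\eps$ produces, in $\poly(m,k)$ time, an outcome $W$ that (a) is DF1 for $\mcI'(\beta)$, (b) has cost at most $(1.647+\eps) b' = (1.647+\eps)\,\beta\, b$, and (c) has welfare at least $(1-\eps)\,\OPT'(\beta)$. Guarantees (a) and (c) match the corollary statement verbatim once we fix $\beta = 1/1.647$. For (b), I would expand $(1.647+\eps)\beta b = (1 + \eps\beta)\, b$; since $\beta = 1/1.647 < 1$, this is bounded by $(1+\eps)b$, which is exactly the budget constraint the corollary asks for.

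There is essentially no obstacle here: Theorem 2 is a black box, the instance transformation is efficient, and the arithmetic works out cleanly because the reciprocal $1/1.647$ was chosen precisely so that the $1.647\times$ budget blowup in Theorem 2 lands back at the true budget $b$. The only care point is selecting the slack parameter passed into Theorem 2. Using $\eps' = \eps$ in that call works directly for both the welfare and the budget bound, since the additive $\eps b'$ overshoot becomes $\eps\beta b \le \eps b$; if one wanted the tighter $c(W) \le (1+\eps)b$ with a matching constant, one could instead set $\eps' = \eps/\beta$, still a constant, at no asymptotic cost. Polynomial runtime follows from Theorem 2 plus the polynomial cost of building $\mcI'(\beta)$, completing the proof plan.
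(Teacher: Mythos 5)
Your proposal is correct and matches the paper's own argument: the corollary is obtained exactly by applying \Cref{thm:DF1det} to the rescaled instance $\mcI'(1/1.647)$ and observing that the $(1.647+\epsilon)b'$ cost bound collapses to $(1+\epsilon/1.647)b \le (1+\epsilon)b$. Your additional remarks on constructing $\mcI'(\beta)$ via per-district knapsack calls and on the choice of slack parameter are accurate but not needed beyond what the paper already states.
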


Our result uses a submodular optimization as a subroutine. If one allows randomization in this subroutine, algorithms with better approximation ratios are known.
Thus, we can prove a similar theorem (and corollary) with a \emph{randomized} algorithm which achieves DF1 and optimal social welfare while overspending its budget by only a $\frac{1}{e} \approx .37$ fraction of the budget, with high probability (i.e., with probability $1 - \frac{1}{p(m,k)}$ where $p(m,k)$ is some polynomial in $m$ and $k$). We defer details of our randomized algorithm to \Cref{sec:randDF1} in the supplementary material.


In the remainder of this section, we will prove \Cref{thm:DF1det}. Our main tool is a notion of the ``coverage'' of a partial outcome. An outcome has high coverage if we do not need to spend much more money to make it district-fair. On a high level, our proof consists of two main steps. First, we show how to complete an outcome with good coverage into a DF1 outcome. Second, we will show how to frame the problem of finding a solution with good coverage and social welfare as a submodular maximization problem subject to linear constraints, allowing us to use a result by \citet{mizrachi2018tight}.

We begin by formalizing the coverage of a solution. Roughly, if we imagine that initially every district requires its portion of the budget for fairness, then fractional coverage captures how much less districts must spend to satisfy their own fairness constraints. Thus, if we imagine that our algorithm first spends its budget to satisfy fairness as efficiently as possible, and then spends the remainder of its budget on the highest utility projects, then the coverage of a collection of projects is roughly how much budget this collection ``frees up'' for the algorithm to spend on the highest utility projects. More formally, we define coverage by way of the notions of fractional outcomes and residual budget requirements.

\begin{definition}[fractional outcomes]
	A fractional outcome is a vector $p \in \mathbb{R}^m$ where $0 \leq p_j \leq 1$. We overload notation and let the social welfare of $p$ for district $d_i$ be $\sw_i(p):= \sum_j \sw_i(x_j) \cdot p_j$. Similarly the social welfare of $p$ is $\sum_i \sw_i(p)$. Lastly, we define the cost of $p$ as $\sum_j c(x_j)\cdot p_j$. 
\end{definition}

We now define the residual budget requirement of a district, given an outcome, which can be understood as the minimum amount of additional money that must be spent to satisfy the district, if fractional outcomes are allowed. 

\begin{definition}[$\res_i(W)$]
	The residual budget requirement of district $d_i$ given (integral) outcome $W$ is the minimum cost of a fractional outcome $p$ such that $\sw_i(W) + \sw_i(p) \geq f_i$ and $p_j = 0$ for all $x_j \in W$.
\end{definition}

We can now define the coverage of an outcome for a particular district $i$ in terms of the total amount of budget they deserve and their residual budget requirement.

\begin{definition}[$\cov_i(W)$]\label{dfn:covi}
	The coverage of an outcome $W$ for district $d_i$ is the difference between the amount of budget they deserve, $b_i$, and their residual budget requirement: $\cov_i(W) := b_i - \res_{i}(W)$.
\end{definition}

Lastly, we define the coverage of an outcome.

\begin{definition}[$\cov(W)$]\label{dfn:coverage}
	The overall coverage of an outcome $W$ is the sum over all districts $d_i$ of the coverage $W$ affords $d_i$: $\cov_i(W) := \sum_{i} \cov_i(W)$.
\end{definition}

Next, we establish a useful property of DF1 solutions. In particular, given a set of projects that achieves relatively good fairness on \emph{average}, we can then buy a small subset of projects that results in fairness up to one good for all districts.
In particular, given a collection of projects that covers a $1-\beta$ fraction of all fairness constraints, we can use at most an extra $\beta$ fraction of our budget in order to complete this to a DF1 solution. Moreover, this completion is quite intuitive: purchase all projects whose total coverage exceed their cost, until there are no such projects remaining. 

Formally, we state the following \emph{DF1 completion} lemma.

\begin{lemma}[DF1 Completion]
	\label{lem:df1completion}
	Given an outcome $W$ with $\cov(W) = b - r$, one can compute in polynomial time a set $W' \supseteq W$ such that $W'$ is DF1 and $c(W') \leq c(W) + r$.
\end{lemma}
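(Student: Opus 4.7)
The plan is to prove the lemma by designing a greedy completion procedure and analyzing it with a potential function. I would start from $W_0 := W$ and, while there exists a project $x_j \notin W_t$ whose marginal coverage $\cov(W_t \cup \{x_j\}) - \cov(W_t)$ is at least $c(x_j)$, add any such $x_j$, halting when none exists.

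Since $\cov(W) = b - \sum_i \res_i(W)$, the greedy rule is equivalent to saying that the potential $\Phi(W) := c(W) + \sum_i \res_i(W)$ is non-increasing across iterations. Combined with $\cov(W) \le b$ (so $c(W) \le \Phi(W)$), this gives $c(W') \le \Phi(W') \le \Phi(W_0) = c(W) + r$, which is the stated budget bound. Each iteration strictly enlarges the selected set, so the procedure halts within $m$ iterations, and each value $\res_i(\cdot)$ is a fractional-knapsack computation solvable in polynomial time.

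The heart of the argument is to prove that the output $W'$ is DF1. Suppose otherwise: some district $d_i$ satisfies $D_i := f_i - \sw_i(W') > \sw_i(x_j)$ for every $x_j \notin W'$. Consider the optimal fractional completion $p^{(i)}$ witnessing $\res_i(W')$, obtained by taking projects in $\mcP \setminus W'$ in decreasing order of density $\sw_i / c$. Let $x^\rho$ be the highest-density project in the support of $p^{(i)}$; then $\sw_i(x^\rho) \le \max_{x_j \notin W'} \sw_i(x_j) < D_i$, so $x^\rho$ cannot close the deficit alone, forcing $p^{(i)}$ to use it in full. Adding $x^\rho$ drops $\res_i$ by at least $c(x^\rho)$: the truncated fractional outcome $p^{(i)}$ with its $x^\rho$-coordinate set to $0$ still has welfare $\ge D_i - \sw_i(x^\rho)$, enough to cover the reduced deficit, and has cost exactly $\res_i(W') - c(x^\rho)$. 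For every other district $i' \ne i$, the same zero-the-added-coordinate trick applied to $p^{(i')}$ shows that $\res_{i'}$ is monotonically non-increasing in $W$. Aggregating these contributions, the marginal coverage of $x^\rho$ at $W'$ is at least $c(x^\rho)$, contradicting the termination condition.

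The main obstacle is the DF1 verification: it hinges on the observation that if DF1 fails then the deficit $D_i$ strictly exceeds the welfare of every single project still available, which is precisely what forces the fractional-knapsack optimum to saturate its highest-density project. The monotonicity of $\res$ in $W$ is technically routine but essential for safely aggregating the gain at $d_i$ with the non-negative changes at every other district. The budget bound and polynomial running time then fall out directly from the potential-function accounting.
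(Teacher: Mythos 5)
Your proposal is correct and follows essentially the same route as the paper's proof: both hinge on showing that any non-DF1 outcome admits a project whose addition raises coverage by at least its cost (via the at-most-one-fractional-coordinate structure of the optimal fractional-knapsack witness for $\res_i$), and both complete greedily with the same coverage-based budget accounting. Your potential function $\Phi(W) = c(W) + \sum_i \res_i(W)$ and contrapositive loop condition are just a repackaging of the paper's observation that coverage increases by $c(x_j)$ at each step yet can never exceed $b$.
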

\begin{proof}
	We first prove that for every non-DF1 outcome $W$, there exists a project that we can add to $W$ which increases its coverage by at least $c(x_j)$. Suppose that $W$ is an outcome that fails DF1, and let $d_i$ be a district such that $\sw_i(W) + \sw_i(x_j) < f_i$ for all $x_j \not\in W$. Let $p$ be the fractional outcome witnessing  $\res_{i}(W)$; thus $\sw_i(W) + \sw_i(p) \ge f_i$. We may assume without loss of generality that all but at most one project is integral in $p_j$ (because there is always some optimal $p$ with this property by additivity of $\sw_i$). Since $W$ fails DF1 for $d_i$, there is some $x_j \not \in W$ such that $p(x_j) = 1$. Then $\res_i(W \cup \{x_j\}) = \res_i(W) - c(x_j)$ (witnessed by the fractional outcome obtained from $p$ by removing $x_j$ from it). Thus, from definitions, $\cov_i(W \cup \{x_j\}) = \cov_i(W) + c(x_j)$, and hence $\cov(W \cup \{x_j\}) \ge \cov(W) + c(x_j)$.
	
	Now suppose we are given an outcome $W$ with $\cov(W) = b - r$, which fails DF1. We can identify a project $x_j$ as above, add it to $W$, and increase the coverage by at least $c(x_j)$. We repeat this until the outcome is DF1. This process must stop, since at each step the coverage increases by $c(x_j)$ but by definition the coverage can never exceed $b$. For the same reason, the cost of the projects we have added to $W$ cannot exceed $r$, and thus $c(W') \le c(W) + r$.
%
\end{proof}



With this lemma in hand, we now turn to the problem of finding high-coverage outcomes with good welfare.
Let $B \ge 0$ be a lower bound on the social welfare we desire. 
We rephrase our problem as an optimization problem in which we maximize the coverage of an outcome subject to a linear knapsack constraint and a linear covering constraint. The knapsack constraint enforces budget feasibility, and the covering constraint encodes the requirement that the total utility of the outcome is at least $B$.
%
\begin{equation}\label{P:DF1P}
\tag{DF1P}
\begin{split}
\max_{W \subseteq \mcP} & \,\, \cov(W)  \\
\text{s.t. }
& \sw(W) \ge B, \\
& \:\:\:c(W) \le b.
\end{split}
\end{equation}

The main tool we apply is a theorem on the maximization of nondecreasing submodular functions of \citet{mizrachi2018tight}. Recall that a set function is nondecreasing if its value never decreases as elements are added to its input, and submodular if it exhibits diminishing returns. 

\begin{definition}
    Given a finite set $\Omega$, a set function $f : \Omega \to \mathbb{R}_{\geq 0}$ is \emph{nondecreasing} and \emph{submodular} if for every $A, B \subseteq \Omega$ such that $A \subseteq B$ we have $f(A) \le f(B)$ and $f(A \cup \{x\}) - f(A) \ge f(B \cup \{x\}) - f(B)$ for all $x \in \Omega \setminus B$.
\end{definition}
The theorem we apply is as follows. 
\begin{theorem}[\citealp{mizrachi2018tight}, Theorem 5]
	\label{thm:mizrachi}
	For each constant $\epsilon > 0$, there exists a deterministic algorithm for maximizing a nondecreasing submodular function subject to one packing constraint and one covering constraint that runs in time $O(n^{O(1)})$, where $n = |\Omega|$ is the size of the support of the set function, satisfies the covering constraint up to a factor of $1 - \epsilon$ and the packing constraint up to a factor of $1+\epsilon$, and achieves an approximation ratio of $0.353$.
\end{theorem}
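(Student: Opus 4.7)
The plan is to reduce the problem to an instance of the DF1P optimization in the excerpt (maximize $\cov$ subject to a budget packing constraint and a welfare covering constraint), solve DF1P approximately via Theorem~\ref{thm:mizrachi}, and then apply the DF1 completion lemma (Lemma~\ref{lem:df1completion}) to extend the result to a full DF1 outcome. Since welfare values are polynomially bounded, we can enumerate all $\poly(m,k)$ candidate values of $\OPT$, run the procedure below for each guess of $B$, and return the best resulting outcome; for the correct guess $B=\OPT$ the analysis produces the claimed bounds.

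The key structural observation is that, for $B=\OPT$, the optimum value of DF1P is exactly $b$. For the lower bound, note that $W^*$ is DF1P-feasible: it satisfies $c(W^*)\le b$ and $\sw(W^*)=\OPT$, and since $W^*$ is district-fair, $\sw_i(W^*)\ge f_i$ for every $i$, so the zero vector is feasible for the LP defining $\res_i(W^*)$; hence $\res_i(W^*)=0$ and $\cov(W^*)=\sum_i b_i=b$. For the upper bound, $\res_i(W)\ge 0$ for all $W$ gives $\cov(W)=\sum_i(b_i-\res_i(W))\le b$. Thus the DF1P optimum equals $b$, which is what will let us translate the $0.353$-approximation into the advertised bound on the residual coverage gap.

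The main obstacle is verifying that $\cov$ is nondecreasing and submodular, as required by Theorem~\ref{thm:mizrachi}. Because $\cov=\sum_i\cov_i$ and these properties are preserved under nonnegative sums, it suffices to treat each $\cov_i$ separately. The LP defining $\res_i(W)$ is a fractional knapsack covering LP over $\mcP\setminus W$ with demand $\max(0,f_i-\sw_i(W))$, whose optimum is realized greedily by packing items outside $W$ in decreasing order of the ratio $\sw_i(x_j)/c(x_j)$ until the demand is met, with at most one item taken fractionally. I plan to use this explicit greedy structure plus a short exchange argument to show that the marginal drop $\res_i(W)-\res_i(W\cup\{x\})$ is nonincreasing in $W$ under inclusion, which is equivalent to $\res_i$ being nonincreasing and supermodular, hence $\cov_i$ nondecreasing and submodular. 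The delicate case is when $x$ is the fractional item in one of the two greedy solutions; this is handled by writing $\res_i$ in closed form in terms of ratios and residual demand and checking the inequality case-by-case. Note that evaluating $\cov(W)$ is itself polynomial-time, since each $\res_i(W)$ reduces to a single fractional knapsack computation, so we obtain the value oracle needed by Theorem~\ref{thm:mizrachi}.

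Given nondecreasing submodularity of $\cov$ and the polynomial-time oracle, I invoke Theorem~\ref{thm:mizrachi} with $\Omega=\mcP$, packing constraint $c(W)\le b$, and covering constraint $\sw(W)\ge\OPT$, to obtain in $\poly(m,k)$ time an outcome $W_0$ satisfying $c(W_0)\le(1+\epsilon)b$, $\sw(W_0)\ge(1-\epsilon)\OPT$, and $\cov(W_0)\ge 0.353\cdot b$, using that the DF1P optimum equals $b$. Applying Lemma~\ref{lem:df1completion} with $r=b-\cov(W_0)\le 0.647\,b$ yields $W\supseteq W_0$ that is DF1 and satisfies
\[
c(W)\;\le\;c(W_0)+r\;\le\;(1+\epsilon)b+0.647\,b\;=\;(1.647+\epsilon)b,
\]
while monotonicity of $\sw$ and $W\supseteq W_0$ give $\sw(W)\ge\sw(W_0)\ge(1-\epsilon)\OPT$. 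Taking the correct guess of $\OPT$ in the enumeration yields an outcome with all three required properties, proving the theorem.
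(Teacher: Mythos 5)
There is a fundamental mismatch here: the statement you were asked to prove is Theorem~\ref{thm:mizrachi} itself, i.e., the general result of \citet{mizrachi2018tight} that \emph{any} nondecreasing submodular function can be deterministically maximized to within a $0.353$ factor subject to one packing and one covering constraint (with the constraints satisfied up to $1+\epsilon$ and $1-\epsilon$ respectively). Your proposal does not prove this; instead it \emph{invokes} Theorem~\ref{thm:mizrachi} as a black box (``I invoke Theorem~\ref{thm:mizrachi} with $\Omega=\mcP$\dots'') and uses it, together with \Cref{lem:df1completion}, to derive the paper's \Cref{thm:DF1det}. Using the target theorem as an ingredient in its own proof is circular, so as a proof of the stated result the proposal has no content. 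In the paper this statement carries no proof at all: it is imported from \citet{mizrachi2018tight}, and an actual proof would require constructing and analyzing a deterministic algorithm for general monotone submodular maximization under mixed packing/covering constraints (the techniques there involve guessing high-value elements and a careful greedy/continuous relaxation analysis), none of which appears in your write-up.

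That said, what you have actually written is a reasonable reconstruction of the paper's proof of \Cref{thm:DF1det}: the observation that the DF1P optimum equals $b$ when $B=\OPT$ (via $\res_i(W^*)=0$ for a district-fair $W^*$ and $\cov(W)\le\sum_i b_i=b$ in general), the enumeration over polynomially many guesses of $B$, the submodularity of $\cov=\sum_i \cov_i$ (which the paper states as \Cref{lem:submod} without the greedy/exchange details you sketch), and the final application of \Cref{lem:df1completion} with $r\le 0.647\,b$ all match the paper's argument for that theorem. If the intended target was \Cref{thm:DF1det}, your proposal is essentially correct and follows the same route as the paper, with the added (and welcome) detail of how one would verify submodularity of $\cov_i$ via the fractional-knapsack structure of $\res_i$. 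But as a proof of Theorem~\ref{thm:mizrachi} it does not stand.
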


We apply this theorem to find a solution that satisfies a $0.353$ fraction of coverage and achieves optimal fairness-constrained utility. Then, we apply \Cref{lem:df1completion} to augment our solution using an additional $1 - 0.353 + \epsilon$ fraction of our budget in order to obtain a final solution which satisfies full DF1.
However, in order to apply \Cref{thm:mizrachi}, we must first establish that $\cov(W)$ is a nondecreasing submodular function. In particular, note that the coverage functions $\cov_i(W)$ for each district are clearly nondecreasing and submodular. It follows that their sum, $\cov(W)$ is also nondecreasing and submodular, yielding the following lemma.

\begin{lemma}
	\label{lem:submod}
	The function $\cov(W)$ is nondecreasing and submodular.
\end{lemma}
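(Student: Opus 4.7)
The plan is to reduce the claim about $\cov$ to the corresponding per-district claims and then establish them directly. Since $\cov(W) = \sum_i \cov_i(W)$ and the class of nondecreasing submodular functions is closed under nonnegative linear combinations, it suffices to show that for each district $d_i$, $\cov_i$ is nondecreasing and submodular. Because $\cov_i(W) = b_i - \res_i(W)$ and $b_i$ is a constant, this in turn reduces to showing that $\res_i$ is nonincreasing and supermodular.

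For monotonicity, I would proceed by a direct ``zero-out'' argument. Given $A \subseteq A \cup \{x\}$ and an optimal fractional completion $p$ witnessing $\res_i(A)$, define $q$ by $q_x := 0$ and $q_j := p_j$ for $j \neq x$. Then $q$ is a feasible completion for $A \cup \{x\}$, since
\[
\sw_i(A \cup \{x\}) + \sw_i(q) \;=\; \sw_i(A) + \sw_i(x) + \sw_i(p) - \sw_i(x) p_x \;\ge\; \sw_i(A) + \sw_i(p) \;\ge\; f_i,
\]
and its cost is $\text{cost}(p) - c(x) p_x \le \res_i(A)$. Hence $\res_i(A\cup\{x\}) \le \res_i(A)$, so $\cov_i$ is nondecreasing.

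For submodularity, my plan is to exploit the greedy characterization of the min-cost fractional knapsack cover: for any $W$, $\res_i(W)$ equals the cost of the solution that sorts $\mcP \setminus W$ in decreasing order of $\sw_i(x_j)/c(x_j)$ and accumulates items in this order (fractionally on the marginal item) until the deficit $(f_i - \sw_i(W))^+$ is met. Given $A \subseteq B$ with $x \notin B$, I want to show $\cov_i(A\cup\{x\}) - \cov_i(A) \ge \cov_i(B\cup\{x\}) - \cov_i(B)$, equivalently that the reduction in $\res_i$ from adding $x$ is larger for the smaller set. Adding $x$ shrinks the deficit by $\sw_i(x)$, so the greedy can shed the same $\sw_i(x)$ units of utility from the \emph{tail} of its covering; the cost saved is $\sw_i(x)$ times the effective inverse ratio of the shed tail items. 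Because $A$'s deficit exceeds $B$'s, the greedy for $A$ extends further into lower-ratio items than the greedy for $B$, so the inverse ratio at the tail is higher for $A$ and the cost saved is correspondingly larger, yielding exactly the submodular inequality.

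The main obstacle is making the tail argument fully rigorous in the corner cases where $x$ itself lies inside the greedy tail for $A$ or $B$ (so removing $x$ from availability shifts the greedy's path), or where adding $x$ drives the deficit to zero. The latter case is absorbed by monotonicity, while the former is handled by a short case analysis on the position of $x$ in the ratio sort relative to the fractional cutoff for $A$ and for $B$. A slicker alternative is to pass to LP duality: writing
\[
\res_i(W) \;=\; \max_{\alpha \ge 0}\Bigl[(f_i - T)\alpha + \sum_{j \notin W} \min(\alpha\,\sw_i(x_j),\,c(x_j))\Bigr], \quad T := \sum_j \sw_i(x_j),
\]
the dual objective is modular in $W$ at every fixed $\alpha$, and a short argument on the behavior of the optimizing $\alpha$ lifts this pointwise modularity to supermodularity of $\res_i$, and hence submodularity of $\cov_i$.
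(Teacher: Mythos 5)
Your reduction is the same one the paper uses: write $\cov(W)=\sum_i \cov_i(W)$ and prove the per-district claim (the paper simply asserts that each $\cov_i$ is ``clearly'' nondecreasing and submodular). Your monotonicity argument via zeroing out $p_x$ is complete and correct. The gap is in the submodularity step, at its central inequality. You argue that the greedy cover for $A$ terminates at a weakly lower ratio than the one for $B$ ``because $A$'s deficit exceeds $B$'s.'' But that is only half of the comparison: the greedy for $B$ draws from a strictly \emph{smaller} pool, $\mcP\setminus B\subsetneq \mcP\setminus A$, and a smaller pool by itself pushes the cover \emph{deeper} into low-ratio items. These two effects pull in opposite directions, and the claim is rescued only by an exact cancellation you never invoke: the items removed from the pool in passing from $A$ to $B$ have total $d_i$-value exactly equal to the reduction in deficit, i.e., $D_A-D_B=\sw_i(B\setminus A)$. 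Concretely, writing $V_S(\rho)$ for the total $d_i$-value of items of ratio at least $\rho$ in $\mcP\setminus S$, one has $V_B(\rho)\ge V_A(\rho)-\sw_i(B\setminus A)=V_A(\rho)-(D_A-D_B)$, which is what shows that the cutoff ratio for $(D_B,\mcP\setminus B)$ is at least that for $(D_A,\mcP\setminus A)$, and more generally that the marginal covering cost at level $D_B-r$ for $B$ is at most that at level $D_A-r$ for $A$ for every $r\in[0,\sw_i(x)]$. Without this observation the tail comparison is unjustified, and it is the crux of the whole proof, not a corner case.

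The LP-duality fallback as you describe it also does not close the argument. The dual value is a pointwise maximum over $\alpha$ of functions that are modular in $W$, but a maximum of modular functions is not supermodular in general, and the natural ``argmax monotonicity'' step fails because it requires comparing the optimal multipliers for $A\cup\{x\}$ and for $B$, which are incomparable sets ($x\notin B$), so monotonicity of the maximizer in the set gives no relation between them. Either route can be completed, but both need the same quantitative input: the deficit and the available pool shrink in lockstep as the outcome grows. I recommend you restructure the submodularity proof around the inequality $V_B(\rho)\ge V_A(\rho)-(D_A-D_B)$ (handling separately the degenerate cases where a deficit is clipped at zero), after which the bound $\cov_i(A\cup\{x\})-\cov_i(A)\ge \cov_i(B\cup\{x\})-\cov_i(B)$ follows by integrating the marginal-cost comparison over the last $\sw_i(x)$ units of deficit.
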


We are now ready to prove \Cref{thm:DF1det}, which applies the DF1 completion lemma to an approximately optimal solution for the problem \ref{P:DF1P}.

\begin{proof}[Proof of \Cref{thm:DF1det}]

Recall that we have assumed that the maximum utility of an outcome is polynomially bounded in $m$ and $k$ and that the maximum utility is integral. Thus, the value of $\OPT$ falls in a polynomial range. For each value $B$ in this range, solve the problem \ref{P:DF1P} using the algorithm from \Cref{thm:mizrachi}. Now consider all values of $B$ for which the algorithm returned a solution with $\cov(W) \ge 0.353b$; such a value must exist since we are guaranteed this condition when $B = \OPT$ (since for this value, the optimum of problem \eqref{P:DF1P} is $b$). Among all solutions we found that satisfy $\cov(W) \ge 0.353b$, take the one that maximizes $\sw(W)$. This solution provides social welfare \textit{at least} $(1 -\epsilon) \OPT$.

We have obtained an outcome $W$ with 
\[
\cov(W) \ge 0.353b = b - 0.647b,
\] 
and $\sw(W) \ge (1-\epsilon)\OPT$ and $c(W) \le (1+\epsilon) b$. Now apply \Cref{lem:df1completion} to $W$ to obtain a DF1 outcome $W' \supseteq W$ with
\[
c(W') \le c(W) + 0.647b \le (1+ 0.647 + \epsilon)b.
\]
This outcome $W'$ satisfies the requirements of \Cref{thm:DF1det}.
\end{proof}

\section{Discussion}

Our results extend to the special case of unit costs, also known as committee selection. In committee selection, we elect a committee to represent voters in a larger governmental body such as a parliament. Often, to ensure local representation, the electorate is split into voting districts, which elect their representatives separately. The districts may be apportioned different numbers of representatives, for example based on district size. While this scheme guarantees each district representation, it may well be possible to increase the welfare of the voters in a district, for example by electing a diverse array of candidates with expertise in various areas who can gather votes from across the electorate. Thus, it is natural for all districts to elect the committee together if we impose district-fairness constraints. This way, we can maximize social welfare of the final committee while guaranteeing each district fair representation. This gives a more holistic view of committee selection in exactly the same way we addressed participatory budgeting, only instead of pooling the budget between districts, we now pool seats on a committee.

Our model implicitly treats districts as atoms, and so district fairness is a kind of \emph{individual rationality} property. In turn, individual rationality is a type of strategyproofness: it incentivizes districts not to leave the central election and instead hold a separate one. Is it possible to design a voting scheme that is fully strategyproof for districts, so that districts do not have incentives to misreport the utilities of their residents? Unfortunately not: \citet{Pete18a} proves an impossibility theorem about committee elections which implies that there does not exist a voting rule that is efficient, district-fair, and also strategyproof. This result holds even for approval votes.

Several open questions remain. Most obvious is the question of whether can we achieve welfare maximization and DF1 in polynomial time while guaranteeing to overspend the budget by less than $1/e$. More broadly, it would be interesting to study our problem with more general utility functions such as submodular or even general monotone valuation functions.
Additionally, it would be exciting to study approximation algorithms which promise full district fairness. In Appendix B in the supplementary material, we present an algorithm which satisfies district fairness and provides a $1/2$-approximation to optimal district-fair social welfare in the special case of unanimous districts; it would be interesting to extend this result to the general case.


\bibliography{district-fair}


\appendix

\section{$\frac{1}{2}$-Approximation for Unanimous Districts with Unit Costs}
\label{app:halfapx}

In this section we study approximation algorithms for the simplest version of our problem which we know to be NP-hard: when each districts consist of a single voter and every project has unit cost. In fact, we will study a strictly more general setting than each district consisting of a single voter; namely, we study the setting where each district is ``unanimous.'' Formally, we study instances of district-fair welfare maximization where $c(x_j) = 1$ for all $j$ and $\sw_i(x_j) \in \{0, |d_i|\}$ for all $i$ where $|d_i|$ is the number of voters in $d_i$. For this setting we will give a $\frac{1}{2}$-approximation.

Our algorithm will make use of the following notion of conditional coverage which builds on \Cref{dfn:covi}.
\begin{definition}
	The coverage of a project $x_j$ given an outcome $W$ is $\cov(x_j | W) \coloneqq \sum_i \cov_i(W \cup \{x_j\}) - \cov_i(W)$.
\end{definition}
Notice that by our assumption of unit cost and unanimous districts we have that $\cov(x_j | W) \in \mathbb{Z}_{\geq 0}$.



We now present our greedy algorithm that satisfies district fairness and achieves a $1/2$-approximation to the optimal district-fair utility for the setting of unanimous districts and unit costs. Formally, the algorithm, which we call the \emph{Unanimous Greedy Algorithm (UGA)} proceeds as follows.

\begin{enumerate}
	\item Given an instance $\mcI$, initialize $W_0 \gets \emptyset$.
	\item For $j \in [b]$:
	\begin{enumerate}
		\item Let $c_j := \max_{x_j} \cov(x_j|W_{j-1})$ be the max possible coverage and let $X_j := \{x_j : \cov(x_j| W_{j-1}) = c_j\}$ be all projects which achieve this coverage.
		\item Let $x_j \coloneqq \argmax_{x_j \in X_j} \sw(x_j)$ be the max covering project with maximum utility.
		\item Update $W_j \gets W_{j-1} \cup \{x_j\}$.
	\end{enumerate}
	\item Return $W_b$.
\end{enumerate}


\begin{theorem}
	\label{thm:UGA}
	Given an instance $\mcI$ consisting of unanimous approval districts, UGA returns a solution which satisfies district fairness and achieves a $1/2$-approximation to the optimal district-fair utility. 
\end{theorem}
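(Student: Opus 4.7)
The plan is to prove the two parts of the theorem separately: first that UGA outputs a district-fair outcome, then that it achieves a $1/2$-approximation.

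For district fairness, I would show by induction on $j$ that $\cov(W_j) \ge \min(j,b)$. Since $\cov_i(W) \le b_i$, with equality iff $\sw_i(W) \ge f_i$, and since $\sum_i b_i = b$, the bound $\cov(W_b) \ge b$ forces $\cov_i(W_b) = b_i$ for every $i$, yielding district fairness. For the inductive step, suppose $\cov(W_{j-1}) < b$, so some district $d_i$ has $\cov_i(W_{j-1}) < b_i$ and hence $\sw_i(W_{j-1}) < f_i$. In the unanimous unit-cost regime we have $f_i = |d_i| \cdot \min(b_i, |A_i|)$, where $A_i$ denotes the set of projects $d_i$ approves, so $|W_{j-1} \cap A_i| < \min(b_i, |A_i|)$ and some project $x \in A_i \setminus W_{j-1}$ exists. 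Adding $x$ raises $\sw_i$ by $|d_i|$ and thus $\cov_i$ by exactly $1$; hence $c_j \ge 1$ and UGA's pick gains at least one unit of coverage, completing the induction.

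For the $1/2$-approximation, let $W^*$ be an optimal DF outcome; since projects have unit cost, I may assume without loss of generality that $|W^*| = b = |W|$ (if $|W^*| < b$, pad with arbitrary remaining projects, which preserves DF and cannot decrease welfare). Writing $w(x) := \sw(x)$, the bound $\sw(W) \ge \tfrac{1}{2}\sw(W^*)$ reduces to exhibiting a bijection $\phi : W^* \to W$ with $w(\phi(o)) \ge \tfrac{1}{2} w(o)$ for every $o \in W^*$. Projects in $W \cap W^*$ can be paired with themselves; the real work is in matching $W^* \setminus W$ injectively into $W \setminus W^*$.

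My plan for the matching is to process UGA's iterations in order and greedily pair each unmatched pick $y_j \in W \setminus W^*$ with an OPT project $o \in W^* \setminus W$ still ``eligible'' at step $j$, meaning $\cov(o \mid W_{j-1}) = c_j$; by UGA's selection rule, $w(y_j) \ge w(o)$ for such $o$, so this pairing alone gives a factor of $1$. The factor $\tfrac{1}{2}$ enters for ``orphan'' OPT projects whose coverage-contributing districts were consumed earlier by greedy's picks, which forces charging such projects to earlier greedy picks at multiplicity up to two. The main obstacle is bounding this overload at $2$: the combinatorial heart of the proof is a Hall-type argument showing that no $y_j$ serves as the $\phi$-image of more than two OPT projects, using both the submodularity of $\cov$ and the fact that $W^*$ itself satisfies $\cov(W^*) = b$. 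This is precisely the ``matching away high utility goods'' step alluded to in the paper's overview.
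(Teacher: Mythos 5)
Your district-fairness argument is fine and is in fact more careful than the paper's (which simply asserts fairness): the induction $\cov(W_j)\ge\min(j,b)$ together with $\cov_i(W)\le b_i$ and $\sum_i b_i=b$ does force $\res_i(W_b)=0$ for every $i$.

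The approximation half has a genuine gap. The entire difficulty of the theorem is concentrated in the sentence where you promise a ``Hall-type argument'' showing that no greedy pick is charged more than twice; you never supply it, and it is not clear it can be supplied in the purely element-wise form you set up. Two concrete problems. First, your scheme is internally inconsistent: you ask for a bijection $\phi$ with $w(\phi(o))\ge\frac{1}{2}w(o)$, but then describe an at-most-2-to-1 map with $w(\phi(o))\ge w(o)$; these are different invariants and the sketch commits to neither. Second, and more seriously, the comparison between the OPT projects that do coverage work and the greedy projects that do coverage work is inherently a \emph{set-level} comparison, not an element-wise one: greedy may satisfy the residual fairness requirements with several medium-utility coverage-1 projects while OPT satisfies them with one high-utility project plus low-utility ones, so no single greedy pick need dominate the high-utility OPT project even up to a factor of two, yet the totals can still compare correctly. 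The paper's proof handles exactly this with a global accounting rather than a matching: it partitions greedy's picks into $N_2$, $N_1$, $N_0$ by conditional coverage at purchase time, shows that $N_1$ is \emph{optimal} for a residual instance $\mcI'$ (yielding $\sw(N_1^*)\le\sw(N_1)$ as a set inequality with no per-element domination), derives $|N_0|\ge|N_2|/2$ from $\cov(W)=b$, and uses a top-$k$ versus top-$2k$ argument only for the $N_0$ part, which is the one place an element-wise 2-to-1 charge genuinely exists. To complete your proof you would need either to reproduce this set-level decomposition or to actually prove the matching claim, and the latter is exactly the step where the local approach is likely to break.
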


\begin{proof}
	
	Let $W$ represent the result of UGA, and let $W^*$ represent the optimal district-fair outcome. Furthermore, let $N_2, N_1$ and $N_0$ be all projects purchased by UGA that had conditional coverage at least 2, exactly 1 and exactly 0 when purchased by UGA respectively. Clearly $W$ is district-fair and budget-feasible and so we need only argue that it achieves at least $\sw(W^*)/ 2$ utility.
	
	Now, consider the following subproblem, which we will call $\mcI'$, which intuitively is our original instance $\mcI$ but where all of $N_2$ is forced to be in a solution and no projects from $N_0$ are available. More formally, $\mcI'$ is $\mcI$ but where our budget is changed to $b' := b-|N_2| - |N_0|$, $f_i$ is changed to $f_i' := \max(f_i - \sw_i(N_2), 0)$ for all $i$ and the set of purchasable projects is $\mcP' := \mcP \setminus N_2$. $\sw_i(x_j)$ is the same for all $i,j$ in $\mcI'$ as in $\mcI$. Notice that the coverage of any project in $\mcI'$ is at most $1$ but the total coverage required for fairness is $b'$, meaning that every budget-feasible and district-fair solution for $\mcI'$ has size exactly $b'$. Also notice that $N_1$ is not only feasible for $\mcI'$ but also attains the optimal utility among all district-fair and budget-feasible solutions.
	
	
	We claim that there exists a subset $N_1^* \subseteq W^* \setminus N_2$ which is district-fair and budget-feasible for $\mcI'$. To see this, note that we  can iteratively build $N_1^*$ by initializing it to $\emptyset$ and then repeatedly adding to it any $x_j \in W^* \setminus N_1^* \setminus N_2$ such that $\cov(x_j | N_1^*)$ in $\mcI'$ is at least $1$. After $b'$ such additions we are guaranteed to have a district-fair and budget-feasible solution for $\mcI'$ and such an $x_j$ always exists since $W^* \setminus N_2$ is district-fair for $\mcI'$. As noted above, any district-fair and budget-feasible solution for $\mcI'$ has size $b'$ and so $|N_1^*| = b'$.
	
	
	Thus, since $N_1$ is optimal for $\mcI'$, we know that $N_1$ achieves at least as high utility as $N_1^*$, i.e., $|N_1| = |N_1^*| = b'$, and
	\begin{align}\label{eq:firstbound}
	\sw(N_1^*) \leq sw(N_1).
	\end{align}
	
	It remains to understand the utility of $W^* \setminus N_1^*$. However, note that at least half of the projects other than $N_1$ must be in the $N_0$ phase. That is, $|N_0| \geq |N_2| / 2$. Intuitively, this means that UGA ``frees up'' at least $\frac{b - b'}{2}$ money to spend on high-utility projects. Let $\mcP_0 := \mcP \setminus \{ N_2 \cup N_1\}$ be all projects not in $N_2$ or $N_1$. We have that $\sw(N_0)$ is the utility of the top $\frac{b - b'}{2}$ projects in $\mcP_0$. On the other hand, consider projects in $W^* \setminus N_1^*$. We can divide these into projects which are in $N_2$ and $N_1$ and those which are not. In particular, let $W^*_{1,2} := W^* \setminus N_1^* \cap (N_1 \cup N_2)$ and let $W^*_{0} := W^* \setminus N_1^* \cap \mcP_0$ so that $W^*_{1,2} \cup W^*_0 = W^* \setminus N_1^*$. Now notice that trivially
	\begin{align}\label{eq:secondbound}
	\sw(W^*_{1,2}) \leq \sw(N_1) + \sw(N_2).
	\end{align}
	On the other hand, $|W^*_0| = b - b' - |W^*_{1,2}| \leq b - b'$ and $W^*_0 \subseteq \mcP_0$ and so $\sw(W^*_0)$ is at most the utility of the $b - b'$ highest utility projects in $\mcP_0$. Since our utilities are additive and $\sw(N_0)$ is the utility of the $\frac{b-b'}{2}$ highest utility projects in $\mcP_0$, it follows that
	\begin{align}\label{eq:thirdbound}
	\sw(W_{0}^*) \leq 2 \cdot \sw(N_0).
	\end{align}
	
	Since $W^* = N_1^* \cup W_0^* \cup W_{1,2}^*$, we can combine the above bounds to conclude our $\frac{1}{2}$-approximation. Namely, applying the additivity of our utilities and combining Equations \ref{eq:firstbound}, \ref{eq:secondbound} and \ref{eq:thirdbound} we have
	\begin{align*}
	\sw(W^*) &= \sw(N_1^*) + \sw(W_0^*) + \sw (W_{1,2}^*)\\
	&\leq 2 \cdot \sw(N_0) + 2 \cdot \sw(N_1) + \sw(N_2)\\
	& \leq 2 \cdot \sw(W)
	\end{align*}
	and so we conclude that $\sw(W) \geq \frac{\sw(W^*)}{2}$.
	
	
\end{proof}

\section{Integrality Gap}
\label{app:integralitygap}

Here, we investigate the integrality gap of the natural LP for our problem. As a reminder, the integrality gap of an LP measures how much better a fractional solution can do than an integral solution. An unbounded integrality gap shows that any analysis of an approximation algorithm which charges the value of its integral solution to the value of the optimal LP gives an unboundedely-bad approximation ratio. For this reason integrality gaps are sometimes taken as evidence of hardness of approximation. For more details on the topic of integrality gaps see \citet{williamson2011design}. 

We will show that our LP has an unbounded integrality gap which suggests that approximation algorithms which return budget-feasible and district-fair solutions with nearly-optimal social welfare may be difficult or impossible to attain for the general case. 

Formally our LP and its integrality gap are as follows. Our LP has a variable $y_j$ for each project $x_j$ corresponding to the extent to which we choose $x_j$.
\begin{equation}\label{LP:DFLP}\tag{DFLP}
\begin{split}
\max & \sum_{j}  y_j \cdot \sw(x_j) \\ 
\text{s.t. } & \sum_j y_j \cdot c(x_j) \leq b\\
& \sum_j y_j \cdot \sw_i(x_j) \geq f_i \qquad \forall i\\
& 0 \leq y_j \leq 1 \qquad \forall j
\end{split}
\end{equation}
We let $\ref{LP:DFLP}(\mcI)$ correspond to the polytope corresponding to the above LP for an instance $\mcI$ of district-fair welfare maximization.

The integrality gap of \ref{LP:DFLP} is defined as
\begin{align*}
\min_{\mcI} \frac{\max_{y \in \ref{LP:DFLP}(\mcI)\cap \mathbb{Z}^m}\sum_{j} y_j \cdot \sw(x_j)}{\max_{y \in \ref{LP:DFLP}(\mcI)}\sum_{j} y_j \cdot \sw(x_j)}.
\end{align*}

The basic idea of our integrality gap construction is as follows. We will construct an instance of social-welfare maximization where the preferences of each district are ``circular''. In particular, each district will like  two projects and every project will be liked by exactly two districts. As in our NP-hardness proof, we will also have a collection of dummy projects which are given very high utility by dummy districts which deserve no utility. An optimal fractional solution will be able to choose each non-dummy project to extent essentially $\frac{1}{2}$ to satisfy district-fairness and then spend its remaining budget on high-utility dummy projects. On the other hand, the optimal integral solution will have to spend its entire budget satisfying fairness.

\begin{restatable}{theorem}{IGLB}
	There does not exist a function $f$ such that the integrality gap of \ref{LP:DFLP} is at most $\frac{1}{f(k,m)}$. Further, this integrality gap holds even when all projects have unit cost.
\end{restatable}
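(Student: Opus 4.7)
My approach is to exhibit a family of unit-cost instances with a ``hub + dedicated + dummy'' structure: a single hub project that fractionally contributes half of every district's fairness requirement, one dedicated project per district that supplies the other half, and a pool of extremely high-utility dummy projects. Fractionally, each dedicated project need only be bought halfway, freeing roughly half of the budget to be spent on dummies; integrally, the hub alone is too weak to substitute for any dedicated project, so every dedicated project must be bought in full, the entire budget is consumed on fairness, and no dummies can be purchased. Scaling the dummy utility $K$ then makes the fractional-to-integer ratio arbitrarily large.

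Concretely, for parameters $n \geq 4$ even and $K \geq 1$, I would build an instance $\mcI_{n,K}$ with $n$ real districts $d_1,\dots,d_n$ each deserving $b_i = 1$, plus one auxiliary district $d_0$ with $b_0 = 0$, so the total budget is $b = n$. All projects have unit cost. There are three kinds: a hub $x_0$ with $\sw_i(x_0) = 1$ for every real $d_i$ and $\sw_0(x_0) = 0$; for each real $d_i$ a dedicated $x_i$ with $\sw_i(x_i) = 2$ and zero elsewhere; and $n$ dummies $z_1,\dots,z_n$ each giving utility $K$ to $d_0$ and zero to the real districts. A short knapsack check gives $f_i = 2$ for $i \geq 1$ (achieved by buying $x_i$ alone under the unit budget) and $f_0 = 0$. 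Because the fairness inequality $y_{x_0} + 2\, y_{x_i} \geq 2$ with $y \in \{0,1\}$ forces $y_{x_i} = 1$ for every real $i$, the entire budget is spent on the dedicated projects and the integer optimum equals exactly $2n$. Fractionally, taking $y_{x_0} = 1$, $y_{x_i} = \tfrac{1}{2}$, and $y_{z_\ell} = 1$ for $n/2 - 1$ of the dummies is feasible and yields welfare $2n + K(n/2-1)$.

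Consequently the fractional-to-integer ratio is at least $1 + K(n-2)/(4n)$, which can be driven past any prescribed function $f(k,m)$ by fixing $n = 4$ (so $k = 5$ and $m = 9$) and choosing $K$ sufficiently large in terms of $f(5,9)$, giving the claimed unbounded integrality gap; unit costs hold by construction. The one subtlety I anticipate is pinning down the integer optimum: the key observations are that $x_0$ alone contributes only $1 < f_i$ to each real district, so it cannot substitute for any dedicated $x_i$ under integrality, and that the budget is just barely enough for the $n$ dedicated projects, so $x_0$ cannot be purchased in addition. Verifying the fractional feasibility and the utility computation is then a routine bookkeeping step.
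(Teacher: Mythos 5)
Your proof is correct and follows essentially the same strategy as the paper's: a pool of dummy projects with arbitrarily large utility approved only by a zero-entitlement district, combined with a fairness gadget that a fractional solution can satisfy at roughly half the cost of any integral solution, so the freed budget goes to dummies fractionally while the integral optimum spends everything on fairness. The only difference is the gadget itself (your hub-plus-dedicated-projects star versus the paper's circular structure where each district values two projects and each project is valued by two districts), and both yield the same unbounded gap by scaling the dummy utility.
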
  
\begin{proof}
	Fix $k \in \mathbb{Z}_{\geq 1}$ and a sufficiently small $\epsilon > 0$. We define our instance of social-welfare maximization on $k$ districts where $d_1, d_2, \ldots d_{k-1}$ will be non-dummy districts and the district $d_k$ will be a dummy district. Similarly, we will have $2(k-1)$ projects where $x_1, x_2, \ldots, x_{k-1}$ will be non-dummy projects and the remaining projects $x_{k}, \ldots, x_{2(k-1)}$ will be dummy projects.
	
	For each non-dummy district $d_i$ we let $b_i = 1$ and define its utility for project $x_j$ as
	\begin{align*}
	\sw_i(x_j) := 
	\begin{cases}
	1 + \epsilon &\text{if $j = i$}\\
	1 & \text{if $j = (i+ 1 \text{ mod } k-1) + 1$}\\
	0 &\text{otherwise}
	\end{cases}
	\end{align*}
	
	For the dummy district $d_k$  we let $b_k=0$ and define its utility for project $x_j$ as 
	\begin{align*}
	\sw_k(x_j) := 
	\begin{cases}
	B &\text{if $x_j$ is a dummy project}\\
	0 & \text{otherwise}
	\end{cases}
	\end{align*}
	for $B$ sufficiently large to be chosen later. Notice that $\sw(x_j) = B$ for each dummy project $x_j$. Lastly, we let our budget $b = k-1$ and we let $c(x_j) = 1$ for all $x_j$.
	
	Now notice that each non-dummy district $d_i$ has $f_i = 1 + \epsilon$. Consequently, any district-fair integral solution must include all non-dummy projects, namely $x_1, x_2, \ldots, x_{k-1}$. However, since $b = k-1$, it follows that the only district-fair integral solution is $W_{int} := \{x_1, x_2, \ldots, x_{k-1}\}$ where $\sw(W_{int}) = (2+\epsilon)(k-1)$.
	
	On the other hand, consider the following fractional solution $y$. For each non-dummy project $x_j$ we let $y_j = \frac{1+\epsilon}{2}$. For each dummy project $x_j$ we let $y_j$ be $(1-\frac{1+\epsilon}{2})$. Clearly $\sum_{j} y_j \leq b$. Moreover, notice that for each district $d_i$ we have $\sum_j y_j \cdot \sw_i(x_j) = \frac{1+\epsilon}{2}(2 + \epsilon) \geq 1+\epsilon = f_i$ and so our solution is indeed in the polytope of \ref{LP:DFLP}. However, since $y_j = (1-\frac{1+\epsilon}{2})$ for each dummy project we have that $\sum_{j} y_j \cdot \sw(x_j) \geq B(k-1)(1-\frac{1+\epsilon}{2})$.
	
	Thus, for the above instance we have that the ratio of the optimal integral solution to the optimal fractional solution is at most
	\begin{align*}
	\frac{(2+\epsilon)(k-1)}{B(k-1)\left(1 - \frac{1+\epsilon}{2}\right)} \leq \frac{10}{B}.
	\end{align*}
	
	Since $B$ can be chosen independently of $k$ and $m$, we have that the above instance has integrality gap strictly less than $\frac{1}{f(k, m)}$ for any function $f$ of $k$ and $m$.
\end{proof}

We note that the proof of the above result also rules out any integrality gap which is  which is larger than $o(\frac{1}{c})$ where $c$ is the total number of voters across all districts.

\section{Randomized Optimal DF1 Outcome with Extra Budget}\label{sec:randDF1}

In this section we give our randomized analogues of \Cref{thm:DF1det} and \Cref{cor:det}. We use the notation of \Cref{sec:DF1} throughout this section. Whereas our deterministic algorithms overspend budget by $64.7 \%$, our randomized algorithms will only overspend it by $\frac{1}{e}$ with high probability. Formally, we show the following theorem.

\begin{restatable}{theorem}{Fone}
    \label{thm:DF1}
    There is a $\poly(m,k)$-time algorithm which, given an instance of district-fair welfare maximization, returns an outcome $W$ such that $W$ is DF1, $c(W) \leq \left(1 + \frac{1}{e} + \epsilon\right) b \approx 1.37 b$ with high probability, and $\sw(W) \ge (1 - \epsilon) \OPT$ for any fixed constant $\eps > 0$.
\end{restatable}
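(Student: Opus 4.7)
The plan is to mirror the proof of \Cref{thm:DF1det} line-by-line, substituting the deterministic subroutine of \Cref{thm:mizrachi} with a randomized algorithm for the same mixed packing/covering submodular maximization problem that achieves the optimal $(1-1/e-\epsilon)$ approximation ratio. Since $\cov$ is nondecreasing and submodular by \Cref{lem:submod}, and the problem \eqref{P:DF1P} consists of one packing constraint ($c(W)\le b$) and one covering constraint ($\sw(W)\ge B$), we can invoke the randomized continuous-greedy / swap-rounding variant of \citet{mizrachi2018tight} (or the randomized $(1-1/e)$ algorithms of Kulik–Shachnai–Tamir adapted to mixed constraints) to obtain, with high probability, an outcome $W$ satisfying $\cov(W)\ge (1-1/e-\epsilon)\cdot \mathrm{OPT}_{\text{\eqref{P:DF1P}}}$, $c(W)\le (1+\epsilon)b$, and $\sw(W)\ge (1-\epsilon)B$.

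As in the proof of \Cref{thm:DF1det}, we enumerate $B$ over the $\poly(m,k)$-sized range of possible welfare values, run the randomized subroutine for each, and among all solutions whose coverage clears $(1-1/e-\epsilon)b$ we pick one of maximum welfare. Setting $B=\OPT$ certifies that such a candidate exists, because $W^*$ is district-fair and so $\cov(W^*)=b$, making $b$ the true optimum of \eqref{P:DF1P} at that value of $B$; hence the subroutine returns an outcome with welfare at least $(1-\epsilon)\OPT$. Applying \Cref{lem:df1completion} to this $W$ yields a DF1 superset $W'\supseteq W$ with
\[
c(W')\le c(W)+\bigl(\tfrac{1}{e}+\epsilon\bigr)b \le \bigl(1+\tfrac{1}{e}+O(\epsilon)\bigr)b,
\]
while welfare can only increase. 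Rescaling $\epsilon$ by a constant gives the stated bounds.

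The main obstacle is certifying that a randomized $(1-1/e-\epsilon)$ approximation is indeed available for submodular maximization subject to one packing and one covering constraint with high-probability constraint satisfaction, rather than only in expectation. Standard Chernoff-style concentration for pipage/swap rounding of the multilinear extension supplies packing feasibility with high probability, and covering feasibility can be boosted by running the subroutine $\Theta(\log(mk))$ times and taking the best feasible output, applying a union bound. All structural ingredients — submodularity of $\cov$, the DF1 completion lemma, and the $B$-enumeration — transfer unchanged from \Cref{sec:DF1}, so the novelty is entirely concentrated in the choice of subroutine and its high-probability analysis.
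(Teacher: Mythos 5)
Your proposal is correct and follows essentially the same route as the paper: invoke the randomized variant of \citet{mizrachi2018tight} (their Theorem~1, which gives a $1-\frac{1}{e}-\epsilon$ approximation \emph{in expectation} for one packing plus one covering constraint), amplify to a high-probability coverage guarantee by repeating the subroutine logarithmically many times and taking the best run, enumerate over $B$, and finish with \Cref{lem:df1completion}. The paper carries out the amplification step via Hoeffding's inequality on $n = \omega(\log k/\epsilon_0^2)$ independent runs followed by an averaging argument, which is the same repetition-and-take-best idea you describe.
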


As with \Cref{cor:det} for \Cref{thm:DF1det}, we immediately have a corollary which gives an algorithm which does not overspend its budget.

\begin{corollary}
    There is a $\poly(m,k)$-time algorithm which, given an instance of district-fair welfare maximization, returns an outcome $W$ such that $W$ is DF1 for $\mcI'\left(\frac{1}{1 + 1/e}\right)$, $c(W) \leq \left(1 + \epsilon\right) b$ with high probability, and $\sw(W) \ge  (1 - \epsilon) \OPT'\left(\frac{1}{1 + 1/e}\right)$ for any fixed constant $\eps > 0$.
\end{corollary}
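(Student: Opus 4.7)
The plan is to derive this corollary from Theorem \ref{thm:DF1} in exactly the same way that Corollary \ref{cor:det} is derived from Theorem \ref{thm:DF1det}: by running the randomized algorithm on a scaled-down instance whose reduced budget absorbs the $1/e$ overspending slack promised by the theorem.

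Concretely, given an input instance $\mcI$ and a constant $\eps > 0$, I would invoke the algorithm of Theorem \ref{thm:DF1} with precision parameter $\eps$ on the instance $\mcI'(\beta)$ where $\beta := 1/(1 + 1/e)$. This reduced instance has total budget $b' = \beta \cdot b$ and per-district fairness requirements $f_i'$ corresponding to the budget share $\beta b_i$. Let $W$ be the returned outcome; I claim $W$ witnesses the three conclusions of the corollary.

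The DF1 conclusion for $\mcI'(\beta)$ is immediate from Theorem \ref{thm:DF1}, since that algorithm produces a DF1 outcome for whatever instance it is run on. The welfare bound $\sw(W) \ge (1 - \eps) \OPT'(\beta)$ is also immediate, because the quantity $\OPT$ in Theorem \ref{thm:DF1} refers to the optimum district-fair welfare of the instance on which the algorithm is executed, which is $\OPT'(\beta)$ in our application. The budget bound is the only place where the specific choice of $\beta$ matters: by Theorem \ref{thm:DF1} applied with budget $b'$,
\[
c(W) \le \bigl(1 + \tfrac{1}{e} + \eps\bigr) b' = \bigl(1 + \tfrac{1}{e} + \eps\bigr) \beta b = (1 + \eps \beta) b \le (1 + \eps) b
\]
with high probability, where I used the identity $(1 + 1/e) \cdot \beta = 1$ in the second equality and $\beta < 1$ in the final inequality.

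The key arithmetic step is precisely $(1 + 1/e) \cdot \beta = 1$, which is what makes $\beta = 1/(1 + 1/e)$ the natural choice for neutralizing the theorem's $1/e$-fraction of overspending. No real obstacle arises, as this is a clean black-box reduction analogous to Corollary \ref{cor:det}; the ``with high probability'' qualifier passes through from Theorem \ref{thm:DF1} unchanged, and the algorithm remains $\poly(m,k)$-time since only a single invocation of the routine from Theorem \ref{thm:DF1} on an instance of the same size is required.
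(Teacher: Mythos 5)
Your proposal is correct and matches the paper's approach exactly: the paper also obtains this corollary by applying \Cref{thm:DF1} as a black box to the rescaled instance $\mcI'\bigl(\frac{1}{1+1/e}\bigr)$, just as \Cref{cor:det} is obtained from \Cref{thm:DF1det}. The arithmetic $(1 + \frac{1}{e} + \eps)\beta b = (1+\eps\beta) b \le (1+\eps) b$ and the pass-through of the DF1, welfare, high-probability, and runtime guarantees are all as the paper intends.
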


On a high level, this proof will closely follow that of \Cref{thm:DF1det}. In particular, it uses the same submodular optimization framing of the problem (i.e., \ref{P:DF1P}). However, there are some notable differences. In particular, we leverage the following randomized result from \citet{mizrachi2018tight} instead of the previous deterministic result from \citet{mizrachi2018tight}.

\begin{theorem}[\citealp{mizrachi2018tight}, Theorem 1]
	\label{thm:mizrachirand}
	For each constant $\epsilon > 0$, there exists a randomized algorithm for maximizing a nondecreasing submodular function subject to one packing constraint and one covering constraint that runs in time $O(|\Omega|^{O(1)})$, satisfies the packing constraint, satisfies the covering constraint up to a factor of $1 - \epsilon$, and achieves an expected approximation ratio of $1 - \frac{1}{e} - \epsilon$.
\end{theorem}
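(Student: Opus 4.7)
The plan is to follow the now-standard continuous-greedy plus randomized-rounding recipe of Calinescu-Chekuri-Pal-Vondr\'ak, adapted to handle one covering constraint alongside the usual packing constraint. Write the packing constraint as $\langle a, x\rangle \leq A$ and the covering constraint as $\langle b, x\rangle \geq B$, and let $P = \{x \in [0,1]^\Omega : \langle a, x\rangle \leq A,\ \langle b, x\rangle \geq B\}$ be the feasible polytope. Replace $f$ by its multilinear extension $F(x) := \E_{S \sim x}[f(S)]$, where $S$ is obtained by sampling each element $i$ independently with probability $x_i$; monotonicity and submodularity of $f$ make $F$ nondecreasing and concave along any nonnegative direction, which is exactly what powers the continuous-greedy analysis.

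First I would apply the standard bucketing preprocessing: enumerate over guesses of the $O(1/\epsilon)$ highest-value elements in an optimal solution, subtract their contributions from the packing and covering budgets, and discard all remaining elements whose singleton value exceeds an $\epsilon$-fraction of the guessed $\OPT$. This reduces to an instance where no single element contributes more than $\epsilon \cdot \OPT$, which is what lets later concentration and repair arguments cost us only an additive $O(\epsilon)$ in the approximation ratio.

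Next I would run continuous greedy on the slightly shrunken polytope $P'$ obtained by tightening the covering side to $\langle b, x\rangle \geq (1 - \epsilon/2)B$, producing a fractional $x^* \in P'$ with $F(x^*) \geq (1 - 1/e - \epsilon/2)\OPT$. I would then round $x^*$ via independent randomized rounding (or swap rounding on the contracted polytope): by linearity of the multilinear extension, the resulting random set $S$ satisfies $\E[f(S)] \geq F(x^*)$, and by Chernoff-type concentration the covering and packing constraints are each preserved up to a $(1 \pm \epsilon/2)$ multiplicative factor with high probability, which is why the elementwise $\epsilon \cdot \OPT$ cap from bucketing is essential.

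The main obstacle is honoring the packing constraint \emph{exactly}, since independent rounding can overshoot $A$ by a $(1+\epsilon/2)$ factor in the worst realization. The fix is a deterministic repair step: if $\langle a, S\rangle > A$ after rounding, iteratively remove the element of smallest marginal value until feasibility is restored. By the bucketing guarantee, the total value shed is at most an $\epsilon/2$-fraction of $f(S)$ in expectation, so the ratio degrades from $1 - 1/e$ only by an additive $O(\epsilon)$, while the $(1-\epsilon/2)$ covering slack baked into $P'$ absorbs the rounding loss and still leaves us with covering at a $(1-\epsilon)$ factor. Combining everything yields the stated guarantee: exact packing feasibility, covering to within a $(1-\epsilon)$ factor, and expected value at least $(1 - 1/e - \epsilon)\,\OPT$ in polynomial time.
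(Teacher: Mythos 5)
First, a point of scope: the paper does not prove this statement at all --- \Cref{thm:mizrachirand} is imported verbatim as Theorem~1 of \citet{mizrachi2018tight} and used as a black box in the proof of \Cref{lem:DF1Pwhp}. There is therefore no in-paper proof to compare yours against; what you have written is an attempted reconstruction of an external result, and for the purposes of this paper you should simply cite it.

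Judged as a reconstruction, your sketch follows the right general recipe (enumeration preprocessing, continuous greedy on the multilinear extension over the mixed packing/covering polytope, randomized rounding, repair), which is the family of techniques the cited work builds on. But two steps have genuine gaps. (1) Your bucketing controls only singleton \emph{values}, i.e.\ $f(\{i\}) \le \epsilon \cdot \OPT$, whereas the Chernoff-type concentration you invoke for $\langle a, S\rangle$ and $\langle b, S\rangle$ requires the constraint \emph{coefficients} $a_i, b_i$ to be small relative to $A$ and $B$; an element with $a_i = A$ and negligible value passes your filter and destroys concentration on the packing side. Handling large-coefficient elements under a packing and a covering constraint \emph{simultaneously} is precisely the technical difficulty that \citet{mizrachi2018tight} identify --- standard enumeration tricks handle one type of constraint but not both --- so this cannot be waved through as ``standard bucketing.'' (2) The repair step's claim that deleting smallest-marginal-value elements until $\langle a, S\rangle \le A$ costs only an $\epsilon/2$ fraction of $f(S)$ in expectation is asserted rather than argued: the number of elements you must delete is governed by their costs, not their values, so bounding the value lost again requires the coefficient bound from (1) together with an averaging (value-per-unit-cost) argument. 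Neither gap is obviously fatal, but as written the proof does not go through.
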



Additionally, we require Hoeffding's inequality~\cite{Hoeff63}, which bounds the probability that the sum of a sequence of independent random variables deviates from its expectation; we restate it below.
\begin{theorem}[Hoeffding's Inequality]
	Given a sequence of $n$ independent random variables $Y_1, \dots, Y_n$, where each $Y_i$ takes a value in the range $[\alpha_i, \beta_i]$, we have that 
	\[\Pr\left( \E\left[\sum_{k=1}^n Y_k\right] - \sum_{k=1}^n Y_k \ge t \right) \le e^{-\frac{2t^2}{\sum_{i=1}^n (\beta_i - \alpha_i)^2}},\]
	where $t \ge 0$.
\end{theorem}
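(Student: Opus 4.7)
The plan is to follow the classical Chernoff--Cram\'er route: bound the tail via the moment generating function (MGF), factor the MGF using independence, bound each factor using a pointwise inequality on the MGF of a centered bounded random variable (Hoeffding's lemma), and optimize the free parameter. Let $S_n \coloneqq \sum_{k=1}^n Y_k$ and $\mu \coloneqq \E[S_n]$. For any $s > 0$, applying Markov's inequality to the nonnegative random variable $e^{s(\mu - S_n)}$ yields
\[
\Pr(\mu - S_n \geq t) = \Pr\bigl(e^{s(\mu - S_n)} \geq e^{st}\bigr) \leq e^{-st}\, \E\bigl[e^{s(\mu - S_n)}\bigr].
\]
By independence of the $Y_k$, this MGF factors as $\prod_{k=1}^n \E[e^{s(\E[Y_k] - Y_k)}]$, reducing the problem to bounding a single factor for each $k$.

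The main obstacle is establishing \emph{Hoeffding's lemma}: for any centered random variable $Z \in [a,b]$, one has $\E[e^{sZ}] \leq \exp(s^2(b-a)^2/8)$. I would prove this by writing $Z$ as a convex combination $Z = \tfrac{b-Z}{b-a}\, a + \tfrac{Z-a}{b-a}\, b$ and using convexity of $x \mapsto e^{sx}$ to get $e^{sZ} \leq \tfrac{b-Z}{b-a} e^{sa} + \tfrac{Z-a}{b-a} e^{sb}$. Taking expectations and using $\E[Z]=0$ gives $\E[e^{sZ}] \leq \tfrac{b}{b-a} e^{sa} - \tfrac{a}{b-a} e^{sb}$. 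Setting $u = s(b-a)$ and $p = -a/(b-a) \in [0,1]$, and defining $\phi(u) \coloneqq -pu + \ln(1 - p + p\,e^{u})$, a direct calculation shows $\phi(0) = \phi'(0) = 0$ and $\phi''(u) = \tfrac{p(1-p)e^u}{(1-p+pe^u)^2} \leq 1/4$ (since $q(1-q) \leq 1/4$ for all $q \in [0,1]$). Taylor's theorem then yields $\phi(u) \leq u^2/8$, which rearranges to the claimed MGF bound. Applied to $Z_k \coloneqq \E[Y_k] - Y_k$, which is centered and supported in an interval of length $\beta_k - \alpha_k$, this gives $\E[e^{s(\E[Y_k] - Y_k)}] \leq \exp\bigl(s^2(\beta_k - \alpha_k)^2/8\bigr)$ for each $k$.

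Combining the two steps gives, for every $s > 0$,
\[
\Pr(\mu - S_n \geq t) \leq \exp\!\left(-st + \tfrac{s^2}{8}\textstyle\sum_{k=1}^n (\beta_k - \alpha_k)^2\right).
\]
The proof concludes by minimizing the right-hand side over $s > 0$; the exponent is quadratic in $s$, with minimum attained at $s^{\star} = 4t / \sum_{k=1}^n (\beta_k - \alpha_k)^2$, producing the advertised bound $\exp\bigl(-2t^2 / \sum_{k=1}^n (\beta_k-\alpha_k)^2\bigr)$. The whole argument is standard; the only non-mechanical ingredient is the second-derivative estimate in the proof of Hoeffding's lemma, and the remaining work is routine Chernoff-style bookkeeping together with verifying that $s^{\star} > 0$ so that the initial Markov step is valid.
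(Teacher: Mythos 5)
Your proof is correct. Note that the paper does not prove this statement at all---it is quoted as a known result with a citation to Hoeffding's 1963 paper---so there is no in-paper argument to compare against; your Chernoff--Cram\'er derivation (Markov on the exponentiated sum, factorization by independence, Hoeffding's lemma via the convexity bound and the estimate $\phi''(u) = q(1-q) \leq 1/4$, then optimizing $s^{\star} = 4t/\sum_{k}(\beta_k-\alpha_k)^2$) is the standard and essentially original proof of this inequality, and all the computations check out, including the final exponent $-st + s^2\sum_k(\beta_k-\alpha_k)^2/8 = -2t^2/\sum_k(\beta_k-\alpha_k)^2$ at $s = s^{\star}$.
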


We now present the main lemma of the section, which states that there exists a polynomial-time algorithm which, given a guess of the optimal district-fair social welfare $B \le \OPT$, returns an outcome $W$ that is DF1 fair, overspends the budget by approximately $b/e$ with high probability, and achieves at least $(1 - \epsilon)B$ social welfare. 

\begin{lemma}
	\label{lem:DF1Pwhp}
	Given an instance of \ref{P:DF1P}, there is an algorithm that runs in polynomial time which returns an outcome $W$ such that $W$ is DF1, $c(W) \leq \left(1 + \frac{1}{e} + \epsilon \right) b$ with high probability, and $\sw(W) \ge (1 - \epsilon) B$,
	where $\epsilon > 0$ is an arbitrary constant and $B \le \OPT$.
\end{lemma}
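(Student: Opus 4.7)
The plan is to parallel the proof of \Cref{thm:DF1det}, substituting the randomized submodular-maximization routine (\Cref{thm:mizrachirand}) for the deterministic one (\Cref{thm:mizrachi}) and then boosting the expected guarantee to a high-probability one. As in that earlier proof, I would reduce the task to the packing-covering problem \eqref{P:DF1P}, whose objective $\cov(\cdot)$ is nondecreasing and submodular by \Cref{lem:submod}. Invoking \Cref{thm:mizrachirand} with parameter $\epsilon/2$ produces, in a single call, a random outcome $W$ that deterministically satisfies the packing constraint $c(W) \le b$ and the covering constraint $\sw(W) \ge (1-\epsilon/2) B \ge (1-\epsilon) B$, and that has expected coverage $\E[\cov(W)] \ge (1 - 1/e - \epsilon/2)\,b$. (The optimum of \eqref{P:DF1P} is exactly $b$ for any feasible $B \le \OPT$, since a fully district-fair $W^{*}$ has $\res_i(W^{*}) = 0$ for all $i$ and hence $\cov(W^{*}) = b$.)

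Given such a $W$, applying \Cref{lem:df1completion} extends it to a DF1 outcome $W' \supseteq W$ with $c(W') \le c(W) + (b - \cov(W)) \le 2b - \cov(W)$. So the target bound $c(W') \le (1 + 1/e + \epsilon)\,b$ reduces to ensuring $\cov(W) \ge (1 - 1/e - \epsilon)\,b$. Because $\cov(W) \in [0,b]$ with $\E[b - \cov(W)] \le (1/e + \epsilon/2)\,b$, a Markov bound on $b - \cov(W)$ shows a single run meets this threshold with some constant probability $q > 0$ depending only on $\epsilon$. To amplify, I would run \Cref{thm:mizrachirand} independently $T = \Theta(\log(mk)/q)$ times and keep the run with the largest $\cov(W)$. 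By Hoeffding's inequality applied to the $T$ independent $\{0,1\}$-valued success indicators, at least one run succeeds with probability $1 - 1/\poly(m,k)$. Because $c(W) \le b$ and $\sw(W) \ge (1-\epsilon)B$ hold in \emph{every} run, selecting the best-coverage run preserves those two guarantees, and feeding it to \Cref{lem:df1completion} yields the required $W'$.

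The main obstacle is exactly the gap between the \emph{expected} approximation ratio provided by \Cref{thm:mizrachirand} and the \emph{high-probability} cost bound demanded by the lemma: since $\E[\cov(W)]$ is only a constant fraction below the trivial upper bound $b$, no naive concentration inequality applied to a single invocation converts the expected guarantee into a high-probability one within the stated $\epsilon$ slack. The two-stage amplification above---Markov to extract a constant success probability, then independent repetition analyzed via Hoeffding---is the cleanest way to close this gap. A secondary subtlety is that the completion step must preserve the social-welfare lower bound $(1-\epsilon)B$; this is automatic because \Cref{lem:df1completion} only adds projects and $\sw(\cdot)$ is monotone.
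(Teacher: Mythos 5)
Your proposal is correct and follows essentially the same route as the paper: reduce to the problem \ref{P:DF1P}, invoke \Cref{thm:mizrachirand} on the nondecreasing submodular objective $\cov(\cdot)$ (whose optimum is $b$ for any $B \le \OPT$), amplify the in-expectation coverage guarantee by running the algorithm polynomially many independent times and keeping the best run, and finish with \Cref{lem:df1completion}. The only difference is in the amplification step and is immaterial: the paper applies Hoeffding's inequality to the sum of the coverages across runs and then uses an averaging argument, whereas you use Markov's inequality on a single run to extract a constant success probability and then repeat; both yield the stated high-probability bound.
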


\begin{proof}
	Note that, by \Cref{lem:df1completion}, it suffices to find an outcome $W'$ such that $c(W') \le b$, $\cov(W') \ge \left(1 - \frac{1}{e} - \epsilon\right) b$, and $\sw(W') \ge (1 - \epsilon) B$, as we can complete this solution into a DF1 solution with $\left( \frac{1}{e} + \epsilon\right) b$ more budget. 
	
	Let $\epsilon_0 \coloneqq \epsilon / 2$. By \Cref{lem:submod}, we may apply \Cref{thm:mizrachirand} in order to find a solution $W$ such that $\cov(W) \ge (1-\epsilon_0)(1 - 1/e)$ in expectation, $\sw(W) \ge B$, and $c(W) \le b$. We now show how to transform this guarantee in expectation into one with high probability by using Hoeffding's inequality and an averaging argument. 
	
	In order to apply Hoeffding's inequality to our setting, let $Y_j$ represent the coverage of the $j^{th}$ run of our application of \Cref{thm:mizrachirand}. We know that each run is independent, and therefore the $Y_j$'s are also independent.
	Let $n = \omega(\log k/\epsilon_0^2)$, and define a sequence of $n$ independent random variables $Y_1, \dots, Y_n$ representing the coverage of $n$ runs of the mechanism. Furthermore, because coverage is bounded between 0 and $b$, we have that $\beta_i = b$ and $\alpha_i = 0$ for all $i \in [n]$.
	
	By Hoeffding, we have
	\begin{align*}
	&\Pr\left( \E\left[\sum_{j=1}^n Y_j\right] - \sum_{j=1}^n Y_j \ge \epsilon_0 \left( 1 - \frac{1}{e}\right) bn\right) \\
	&\qquad\le \exp\left(-\frac{2(\epsilon_0 \left( 1 - \frac{1}{e}\right)bn)^2}{\sum_{i=1}^n (\beta_i - \alpha_i)^2}\right) \\
	&\qquad= \exp\left(-2\epsilon_0^2 \left( 1 - \frac{1}{e}\right)^2 n\right).
	\end{align*}
	
	Because we set $n = \omega(\log k/\epsilon_0^2)$, this probability goes to 0 polynomially quickly. Therefore, we know that, with high probability, $\E\left[\sum_{j=1}^n Y_j\right] - \sum_{j=1}^n Y_j \le \epsilon_0 \left( 1 - \frac{1}{e}\right)b n$, or $\sum_{j=1}^n Y_j \ge bn (1-\epsilon_0)\left( 1 - \frac{1}{e}\right) - \epsilon_0 \left( 1 - \frac{1}{e}\right) bn$. By an averaging argument, this means that, with high probability, there exists a $Y_i$ such that $Y_i \ge (1-\epsilon_0-\epsilon_0)\left( 1 - \frac{1}{e}\right)b = (1 - \epsilon)\left( 1 - \frac{1}{e}\right)b$, as desired.
\end{proof}

With this lemma in hand, we are ready to prove \Cref{thm:DF1}.

\begin{proof}[Proof of \Cref{thm:DF1}]
	Recall that we have assumed that the maximum utility of an outcome is polynomially bounded in $m$ and $k$ and that the maximum utility is integral. Thus, the value of $\OPT$ falls in a polynomial range. For each value $B$ in this range, run the procedure from \Cref{lem:DF1Pwhp}. By a union bound over these polynomially-many applications of \Cref{lem:DF1Pwhp}, we have that all applications of \Cref{lem:DF1Pwhp} with $B \le \OPT$ succeed with high probability, resulting in a collection of solutions; one for each $B$. Now consider all values of $B$ for which the algorithm returned a solution with $c(W) \le (1 + \frac{1}{e} + \eps)b$; such a value must exist since with high probability we are guaranteed this condition when $B = \OPT$ and we know that, with high probability, all applications of \Cref{lem:DF1Pwhp} succeed for all $B \le \OPT$. Among all solutions we found that satisfy $c(W) \le (1 + \frac{1}{e} + \eps)b$, take the one that maximizes $\sw(W)$. This solution provides utility \textit{at least} $\OPT$.

\end{proof}

\end{document}